\newenvironment{ottdefnblock}[3][]{ \framebox{\mbox{#2}} \quad #3 \\[0pt]}{}
\newcommand{\ottnt}[1]{\mathit{#1}}
\newcommand{\ottmv}[1]{\mathit{#1}}
\newcommand{\ottsym}[1]{#1}
\title{Structural Operational Semantics for Control Flow Graph Machines}
 \author {Dmitri Garbuzov,   William Mansky,  Christine Rizkallah, Steve Zdancewic}
\email{University of Pennsylvania \\ \{dmitri, wmansky, criz, stevez\}@cis.upenn.edu}
\begin{document}
\begin{abstract}
Compilers use control flow graph (CFG) representations of low-level
programs because they are suited to program analysis and
optimizations. However, formalizing the behavior and metatheory of CFG
programs is non-trivial: CFG programs don't compose well, their
semantics depends on auxiliary state, and, as a consequence, they do
not enjoy a simple equational theory that can be used for reasoning
about the correctness of program transformations.
Lambda-calculus-based intermediate representations, in contrast, have
well-understood operational semantics and metatheory, including rich
equational theories, all of which makes them amenable to formal
verification.

This paper establishes a tight equivalence between (a variant of)
Levy's call-by-push-value (CBPV) calculus and a control flow graph
machine whose instructions are in static single assignment (SSA) form.
The correspondence is made precise via a series of abstract machines
that align the transitions of the structural operational semantics of
the CBPV language with the computation steps of the SSA form.

The target machine, which is derived from the CBPV language,
accurately captures the execution model of control flow graphs,
including direct jumps, mutually recursive code blocks, and
multi-argument function calls, and the closure-free subset is similar
to the SSA intermediate representations found in modern compilers such
as LLVM and GCC.
The definitions of all the language/abstract machine semantics and the
theorems relating them are fully verified in Coq.

% The key technical result proves that the high-level and low-level
% operational semantics are different representations of the
% same transition system. This means that a wide variety of
% operational equivalences defined on CBPV terms can be transported to
% the abstract machines while preserving reasoning principles (for
% instance about execution costs). As such, this paper is a step towards
% translating results and techniques between lambda-calculus-based and
% control-flow-graph-based formalisms.

%% Local Variables:
%% fill-column: 70
%% eval: (auto-fill-mode)
%% eval: (flyspell-mode)
%% eval: (setq-local auto-hscroll-mode nil)
%% eval: (setq-local sentence-end-double-space nil)
%% End:

\end{abstract}

\maketitle

% \category{CR-number}{subcategory}{third-level}

% % general terms are not compulsory anymore,
% % you may leave them out
% \terms
% term1, term2

% \keywords
% keyword1, keyword2

%%%%%%%%%%%%%%%%%%
%% PAPER STARTS %%
%%%%%%%%%%%%%%%%%%

%% To include a new file ``filename.mng.tex'': 
%% \include{temp/filename}

\section{Introduction}
\label{sec:introduction}

\paragraph{Control flow graphs} Classical compiler theory, data-flow analysis, abstract
interpretation, and optimization are all traditionally formulated in
terms of \textit{control-flow-graph} (CFG) machines~\cite{Muchnick, Dragonbook}.
CFG-based machines serve as abstract, intuitive models of processor
execution with features that
are suitable for code generation and reasoning about the cost of
execution on typical hardware---register files, program counters,
and direct and indirect jumps to labeled blocks of instructions. These properties make CFG representations
suitable for some low-level optimizations, analyses, and
transformations, so they are widely used in practice.

However, if one wants to create machine-checked \textit{proofs} (as in
CompCert~\cite{compcert}) of properties about program transformations,
such as the correctness of optimization passes, then CFG
representations leave a lot to be desired. The same details that make
CFG machines good models of processor execution must be explicitly
ignored to define suitable notions of program equivalence.
Furthermore, unlike term models, CFGs do not have a single obvious
notion of composition, making contextual reasoning and reasoning about
transformations of CFG programs difficult.
%%%% SSA is not compositional: Adding a new jump that targets the label
%%%% \kwl{f} from some other block requires changing the \kwl{phi}
%%%% instructions at the start of \kwl{f}.}
%For
% example, proving the correctness of even a ``simple'' CFG optimization
% that control-flow blocks (to eliminate a direct jump), code inlining,

\textit{Static single assignment}~\cite{CytronFRWZ91} (SSA) form is a
restricted subset of CFG programs where each pseudo-register is
assigned by a single instruction. A form of conditional parallel move
is supported by special ``phi instructions'' that are inserted at
control-flow merge points. SSA-form CFGs were introduced to improve
the performance of common dataflow analyses and have become ubiquitous
in compiler intermediate representations (IRs). While it was recognized early on that there is a
syntactic similarity between SSA and functional IRs like CPS and ANF,
the relationship between their semantics has never been made precise.
Work formalizing the semantics of SSA intermediate representations,
including Vellvm and CompCertSSA, uses CFG semantics.

%% One (partial)
%% solution to the issues above is to impose additional invariants on CFG
%% programs that make them easier to work with. A compelling choice for
%% those invariants is to restrict CFG code to \textit{static single
%%   assignment}~\cite{CytronFRWZ91} (SSA) form, which ensures that local
%% variables are well scoped and eliminates unnecessary state in favor of
%% immutable bindings for local variables. SSA yields an efficient data
%% structure for working with code and is conducive to many dataflow
%% analyses and optimizations, so it has been adopted by prominent
%% compiler implementations like GCC~\cite{gcc} and LLVM~\cite{llvm}.
%% \steve{Why is this a partial solution?  Also do we want to mention
%%   ANF/CPS/etc. briefly here too?}

\textbf{\textit{Structural operational semantics}}\quad Structural operational
semantics~\cite{sos}  (SOS) was developed specifically
to address the shortcomings of working with such low-level machine
models for reasoning about the operational behavior of programming
languages.
Defining operational semantics based on the structure of program terms
and using meta-level operations like substitution leads naturally to
compositional definitions of program equivalences and compositional
analyses like type systems that enforce scoping or other
well-formedness invariants. Lambda-calculus-based representations of
programs have a well-understood metatheory that offers a wealth of
reasoning techniques~\cite{ATAPL,pitts}. They are also
well-suited for machine formalization.
%% \cite{metatheoryforthemasses}.

However, while there is a large body of work on abstract machines for
languages based on the lambda calculus and the precise relationships
between different styles of operational semantics (see the discussion
on related work in Section~\ref{sec:discussion}), these techniques
have not yet been applied in the context of formalizing the semantics
of control flow graphs.

\textbf{\textit{Contributions}}\quad 
Motivated by the need to develop improved techniques for reasoning
about low-level code in the context of verified compilers, this paper
demonstrates how to reconcile the desiderata above.
Our main contribution is a proof of \textit{machine equivalence}
between (a variant of) Levy's call-by-push-value (CBPV) lambda
calculus~\cite{levy} and a virtual machine that operates on control
flow graphs of instructions in SSA form. 

On the one hand, CBPV is a variant of the lambda calculus with a
well-understood SOS, metatheory, and a robust set of reasoning
principles that are amenable to formal verification. On the other
hand, the CFG machine we describe accurately captures many features of
the ``close-to-the-metal'' execution model, including direct jumps,
mutually recursive code blocks, and multi-argument function calls. CFG
programs in the image of our translation enjoy SSA-like invariants,
and, as a consequence, the closure-free fragment is similar to IRs in
modern compilers such as LLVM and GCC. Additionally, because it is
derived from CBPV, which is a higher-order language, our CFG formalism
smoothly extends to higher-order features.

This result shows that the CBPV and CFG semantics are different
representations of the \textit{same} transition system. This is
stronger than simply a compiler-correctness result: the CFG
machine is correct with respect to \textit{any} choice of
observations of the CBPV SOS, not just a particular program
equivalence, e.g. bisimilarity of traces. Furthermore,
every well-formed (SSA) control flow graph corresponds to some
CBPV term. SSA control flow graphs are then just another way of
efficiently reducing lambda terms. It is possible to take a (possibly
open) CBPV term, produce the corresponding CFG, execute an arbitrary
number of steps of the CFG machine, and recover the residual lambda
term. Techniques for establishing program equivalence, symbolic
execution, and type systems for lambda calculi can be transported to
the CFG machine.

% Coq proofs verified compiler for functional language.
% This state of affairs is somewhat surprising given that Appel declared
% some time ago that ``SSA \textit{is} Functional
% Programming''~\cite{appelssa} and argued that SSA form is essentially
% equivalent to the CPS subset of lambda calculus. 

% Contextual rewrites: equivalences that hold in open context (mention
% advantages over CompCert later)

Although the focus of this paper is on machine-checked proofs of the
equivalence claims mentioned above, these results have potential
applications in compiler implementations. Compilers for functional
languages, which often already used restricted forms of
lambda calculus (such as CPS or ANF), could instead use our CBPV
representation, which solves some issues with those IRs.
%% (see the discussion in Section~\ref{sec:todo})\steve{TODO: where}. 
One could
also imagine implementing a compiler for an imperative language like C
that, rather than targeting an SSA-form CFG IR, instead targets the
first-order subset of our CBPV lambda calculus. We leave such
investigations to future work.

The rest of the paper proceeds as follows. Section \ref{sec:problem}
demonstrates some key features of a CFG machine and foreshadows the
results of the paper by example.
Section \ref{sec:cbpv} introduces the
call-by-push-value calculus, its SOS semantics, and a corresponding
simple abstract machine.
Section \ref{sec:proof} sketches the high-level structure of the main result
and explains our formulation of machine equivalence.
The meat of the paper follows in sections \ref{sec:peak}, \ref{sec:pek}, and \ref{sec:cfg}. Inspired by
the techniques of Ager,\etal~\cite{danvymachines}, we derive the CFG machine via
a series of abstract machines by systematically teasing apart the
static content (\ie{} the control flow graph of instructions) from the
dynamic content (\ie{} the environment and call stack) of a CBPV
term.
Section~\ref{sec:applications} shows how we can put our machine equivalence theorem to work, using it to relate equations on CBPV terms to code transformations. This gives us a technique for verifying compiler optimizations using the equational theory of CBPV.
 We conclude by comparing
our result to existing work in Section~\ref{sec:discussion} and summarizing our results in Section~\ref{sec:conclusion}.
% In particular, our results apply to the entire \textit{pure}
% subset of SSA but omit \textit{effects}; we briefly describe how the
% strong machine equivalence can be leveraged to incorporate reasoning
% about arbitrary observational equivalences, but we leave formalization
% to future work.\steve{CHECK THIS}

% (including
% higher-order functions), not just the intraprocedural fragment.

The definitions of all the language/abstract machine semantics and the
theorems relating them are fully verified in Coq. The formalization is available online~\cite{formalization}.

% Our main contribution is to extend the method of deriving machine
% models presented in

% We prove a strong form of machine equivalence between a structural
% operational semantics for a call-by-push-value lambda calculus and a
% derived control flow graph machine.

% That is, the two forms of operational semantics define the same
% transition system. 

% The resulting CFG machine captures key features of SSA form.

%% Rest of the paper

%% Local Variables:
%% fill-column: 70
%% eval: (auto-fill-mode)
%% eval: (flyspell-mode)
%% eval: (setq-local auto-hscroll-mode nil)
%% eval: (setq-local sentence-end-double-space nil)
%% End:

\section{Background: Abstract Machines and SSA}
\label{sec:problem}

\steve{This tutorial about environment machines is good, but we might
  be able to make its points more concretely by giving examples of
  what we mean.}
This section outlines some of the differences between control flow
graph formalisms and abstract machines for languages, like lambda
calculus, with lexical scope. We assume some familiarity with control
flow graphs and their interpretation as found in a standard compiler
textbook~\cite{Muchnick, Dragonbook} as well as execution models for
languages based on the lambda calculus.
% We
% conclude with an overview of SSA-structured CFGs code by way of
% example and foreshadow our technical developments by showing the same
% program in both CFG and CBPV representations.

Abstract machines are a form of operational semantics in which
auxiliary symbolic structures are used to explicate intensional
aspects of execution. Compared with structural operational semantics
using a substitution meta-operation, the transition rules of an
abstract machine are more suited for implementation on
typical hardware. Accordingly, they are often used as the basis for
efficient interpreters or code generation in a compiler.

% \steve{Abstract machines only make sense for ``high-level'' languages, no?}

%
% From this perspective, the
% control flow graph machine we will derive is not a separate language
% with a useful notion of program equivalence on CFGs, but an efficient
% representation of the evaluator defined by the structural operational
% semantics.

%% TODO so this is how we eliminate the need for reasoning about CFGs?
%% TODO forward reference to precise definition of machine equivalence

\subsection{Environment Machines}

Though a wide variety of abstract machines have been devised for
languages based on the lambda calculus, a large class can be
characterized as \textit{environment machines}. In an environment
machine, the substitutions carried out in the SOS are represented by a
data structure that is used to look up bindings when the machine
encounters a variable. Machine states are easily mapped back to those
of the SOS by carrying out the delayed substitution.

One aspect of execution on a typical CPU captured by CFG formalisms is
the distinction between direct and indirect jumps. Program locations
are addressable by \textit{labels} and control flow transfers such as
jumps or function calls can target literal or computed labels.
Environment machines make a similar distinction: a function called
through a variable in the source program is executed by first
performing a lookup (an indirect jump), whereas an application of a syntactic lambda
skips the indirection through the environment (a direct jump). The obvious drawback to
this scheme is that it is impossible for a single function in the
source program to be the target of more than one direct jump;
functions that are control flow merge points must be referenced
through the environment.

There is another important distinction between calling functions
through a variable compared to a literal lambda in environment
machines: the latter requires allocating fewer closures. To apply a
syntactic lambda in the source program, the environment of the machine
at the point that control reaches the application is used. Bindings in
the environment, however, are uniformly represented as closures, even
if they are used for control flow that is essentially first-order.
Consider the program (in psuedo-ML):

\begin{center}
\begin{tabular}{c}
\begin{lstlisting}[mathescape=true]
let f = $\lambda$ x. M in 
. . .
if g then f c else f c'  
\end{lstlisting}
\end{tabular}
\end{center}

\steve{Can we show the steps of evaluating this example?}
Here, an environment machine would allocate a closure for $f$ when
control proceeds under the binder, saving the environment and
restoring it when $f$ is applied. This seems wasteful, considering
that the machine's environment can only be extended when control
reaches the application of $f$. Control-flow-graph machines, on the
other hand, have an execution model where environments do not have
to be saved and restored. A similar issue occurs simply with
expressions bound by \texttt{let}. When control reaches \texttt{let x
  = M in N} in an environment machine, the machine proceeds by
executing \texttt{M}, pushing \texttt{N} onto the stack. Even though
\texttt{M} might only extend the environment with additional bindings,
the current environment is also saved. To avoid these expensive
operations, compilers for functional languages use CPS or ANF
forms that restrict the \texttt{let} operators to binding only
values and primitives. This results in a simpler execution model, but
requires global rewriting of the source program \cite{cek}.
\steve{Reminder: we need to compare against Peyton Jones' ``join
  points'' ICFP paper somewhere, maybe not here.}

Finally, environment machines for the lambda calculus require
additional machinery to support curried function application. In
call-by-value languages lambdas serve a dual role in evaluating terms:
depending on context, they indicate suspending a computation and
returning to the calling context or creating a binding in the
environment. The CEK machine, for example, uses two types of frames to
indicate if execution is proceeding in function or argument position.

Syntactic restrictions like CPS and ANF explicitly sequence each
application with \texttt{let}, but extend the source with
multi-argument functions and n-ary application. This extension
better reflects the capabilities of the underlying machine, but
complicates the associated equational theory. ANF, in particular, is
not even closed under the usual beta reduction. \cite{kennedy07}
Rather than using a restricted subset of our source language to
enable a simple execution model, we allow the abstract machine to use
information from the context. Let-normalization in the sense of ANF is
performed during partial evaluation of the abstract machine rules to
obtain an instruction set.

In this work, we derive an execution model for a lambda calculus that
addresses the above issues. The need for currying machinery or
extensions for n-ary application is addressed by using Levy's
call-by-push-value (CBPV)~\cite{levy}. Direct jumps are modeled by treating
\textit{computed} bindings like those of lambdas and sequencing
computations separately from \textit{known} bindings indicating
sharing. Efficient use of environments is achieved using a novel
representation of environments. Partially evaluating the resulting
abstract machine produces SSA form control flow graphs for first-order
programs, and a higher-order extension of SSA in general.

% \subsection{Static Single Assignment Form}
% \textit{Static single assignment form} (SSA) is a popular family of
% compiler intermediate representations, used in both leading C
% implementations~\cite{llvm,gcc} and compilers for functional
% languages~\cite{MLton}. It was introduced primarily to increase the
% efficiency of common dataflow-based
% optimizations~\cite{CytronFRWZ91}, and is both an efficient
% datastructure manipulated by the optimizer and a programming language
% in its own right. There are many different variants of SSA IRs, and so
% there is no canonical syntax or operational model, but these
% various IRs all share some features: SSA code is organized as a
% control flow graph of instructions with the constraint that each
% variable defined by an instruction is assigned exactly once,
% statically. The representation has no explicit nested scope, and blocks can refer
% to any label or identifier. However, to be well-formed, each use of an
% identifier must be dominated by its unique definition in the CFG.

\lstdefinelanguage{ANF}{%
  morekeywords={%
    if0,then,else,let,rec,in,and % keywords go here
  },%
  morekeywords={[2]int},   % types go here
  otherkeywords={:,.,-,+,=}, % operators go here
  literate={% replace strings with symbols
    {lambda}{{$\boldsymbol{\lambda}$}}{1}
  },
  basicstyle={\sffamily},
  keywordstyle={\bfseries},
  keywordstyle={[2]\itshape}, % style for types
%  keepspaces,
  mathescape % optional
}[keywords,comments,strings]%
\lstdefinelanguage{SSA}{%
  morekeywords={%
    function,lbl,if0,sub,add,ret,jmp,else,phi % keywords go here
  },%
  literate={% replace strings with symbols
    {<-}{{$\leftarrow$}}{1}
  },
  otherkeywords={:,@}, % operators go here
  basicstyle={\sffamily},
  keywordstyle={\bfseries},
  keywordstyle={[2]\itshape}, % style for types
  keepspaces,
  mathescape % optional
}[keywords,comments,strings]%
\lstdefinelanguage{CFG}{%
  morekeywords={%
    POP,MOV,TAIL,IF0,RET,SUB,TAIL,ADD % keywords go here
  },%
  morecomment=[l]{//},
  otherkeywords={:,@}, % operators go here
  basicstyle={\footnotesize\ttfamily},
  keywordstyle={\bfseries},
  keywordstyle={[2]\itshape}, % style for types
  commentstyle={\rmfamily\itshape},
  keepspaces,
  mathescape % optional
}[keywords,comments,strings]%

\begin{figure}[tbp]
% \begin{minipage}[t]{.30\textwidth}
% \small
% \begin{lstlisting}[language=SSA]
% function power(n, m):
% e:
%   jmp f
% f:
%   m$_1$ <- phi(e:m, g:m$_k$)
%   a$_1$ <- phi(e:1, g:a$_2$)
%   if0 m$_1$ jmp f$_0$ else f$_1$
% f$_0$:
%   ret a$_1$
% f$_1$:
%   m$_2$ <- sub m$_1$ 1
%   jmp g
% g:
%   m$_3$ <- phi(f$_1$:a$_1$, g$_0$:m$_4$)
%   a$_2$ <- phi(f$_1$:0,$_{\;\,}$ g$_0$:a$_3$)
%   m$_k$ <- phi(f$_1$:m$_2$, g$_0$:m$_k$)
%   if0 m$_3$ jmp f else g$_0$
% g$_0$:
%   m$_4$ <- sub m$_3$ 1
%   a$_3$ <- add a$_2$ n
%   jmp g
% \end{lstlisting}
% \end{minipage}
\hspace{-1in}
\begin{minipage}[t]{0.35\textwidth}
\begin{lstlisting}[language=CFG]
let rec
  mult n x a =
    if x = 0 then 0 else
    let y = x - 1 in
      if y = 0 then a else
      let b = a + n in
      mult n y b
in 
  mult
\end{lstlisting}
\end{minipage}
\begin{minipage}[t]{0.20\textwidth}
\begin{lstlisting}[language=CFG]
0:  RET @1

1:  n = POP [2]
2:  x = POP [3]
3:  a = POP [4]
4:  CBR x [8] [5]

5:  y = SUB x 1 [6]
6:  CBR y [7] [9]
7:  RET a

8:  RET 0

9:  b = ADD a n [10]
10: TAIL @1 b y n
\end{lstlisting}
\end{minipage}
\qquad
\begin{minipage}[t]{.20\textwidth}
\[
\begin{array}[t]{l}
 \kw{letrec} \\
\qquad  mult =  \lambda  n.  \lambda  x.  \lambda  a. \\
\qquad  \kw{if0} \ x\ ( \kw{prd} \ 0) \\
\qquad \qquad (x - 1\  \kw{to} \ y\  \kw{in}   \\
\qquad \qquad \qquad ( \kw{if0} \ y\ ( \kw{prd} \ a)  \\
\qquad \qquad \qquad \qquad (a + n\  \kw{to} \ b\  \kw{in}  \\
\qquad \qquad \qquad \qquad \ n  \mathbin{\cdot}  y  \mathbin{\cdot}  b  \mathbin{\cdot}  mult ))) \\
 \kw{in} \\
\qquad mult\\
\end{array}
\]
\end{minipage}

% Example mult_body :=
% (* \n = 2  \ x= 1 \ acc = 0.*)
% Lam (Lam (Lam
%             (* if0 x then ... *)
%             (IfZ (Var 1) (Prd (Nat 0))
%                  (Seq (Aop OMinus (Var 1) (Nat 1))
%                       (IfZ (Var 0) (Prd (Var 1))
%                            (Seq (Aop OPlus (Var 1) (Var 3))
%                                 (App (Var 4) (App (Var 1) (App (Var 0) (Force (Var 5))))))))))). 

% Definition mult := Rec (Prd (Var 0)) [mult_body].

\caption{A sample program $mult$ represented as a control-flow-graph
  machine program as OCaml code (left) in  SSA form (middle) and as a call-by-push-value term (right).}
\label{fig:ssacfg}
\end{figure}

\subsection{Control-flow-graphs vs. Call-by-Push-Value}

To convey the intuition behind our machine correspondence, this
section gives and example program written three ways. The left part of
Figure~\ref{fig:ssacfg} shows a (silly) program $mult$ such that
$mult n m a$ computes $n * (m-1) + a$ by iterated addition,
represented as an OCaml program.  The middle part of the figure shows
the same code using the control-flow-graph virtual machine language
that we derive in this paper.  The right-hand side of the figure shows
the same program as a CBPV term.

A CFG program is represented by a mapping from program points
(numbers running down the left-hand side of the code) to instructions.
As in an SSA representation, each CFG instruction binds a value
to a local variable. So, for instance, the program fragment 
\code{0: n  = POP [2]}
pops the top element of the stack and assigns it to the local
identifier \code{n}. The  CFG
representation does not have implicit ``fall through'' for instruction
sequences. The numbers, like \code{[2]}, written in square brackets at
the end of an instruction indicate the program point to which control
should pass after the instruction executes. A conditional branch like
\code{4: CBR x [7] [9]} selects one of two successor program points
based on the value of its first operand.
(As we will see later, in general CFG programs may treat labels as
first-class values and can use indirect jumps.)

% In our CFG representation, \kw{phi} nodes are replaced by \kwl{POP}
% instructions and the corresponding \kw{jmp}s are replaced with
% argument-taking \kwl{TAIL} call instructions. For example, the tail
% call instruction \kwl{13: TAIL @3 m$_k$ a$_2$} transfers control to
% the program point \kwl{3}---the \kwl{@} marker indicates that the
% program point is being used as a code label that can be called. This
% particular tail call corresponds to the \kw{g} entries of the
% two \kw{phi} nodes at label \kw{f} in the SSA version.

The right side of Figure~\ref{fig:ssacfg} shows the same program yet
again, but this time represented using a call-by-push-value (CBPV) term. We
have chosen the variable names so that they line up precisely with the
CFG representation. We will explain the semantics of CBPV programs and
its correspondence with the CFG machine in more detail below.
For now, it is enough to observe that each $\lambda$-abstraction
corresponds to a \code{POP}. Application is written in reverse order
compared to usual lambda calculus and CBPV also has a sequencing operation $\ottnt{M} \, \kw{to} \, \ottmv{x} \, \kw{in} \, \ottnt{N}$.

\section {CBPV and its CEK Machine}
\label{sec:cbpv}
% \begin{itemize}
% \item history
% \item motivation \\
%   * no cbn/cbv split, can encode both \\
%   * simple operational model of curried functions \\
%   * computations sequenced: effects \\
% \item useful for our purposes
%   * no need to extend lang with multi args / mut rec \\
%   * no cbn/cbv split: removes primary motivation for CPS?
% \item why untyped, open terms?
%   * exotic use of stack interesting for, e.g. varargs \\
%   * SOS on open terms used in interesting operational equivalences 
% \item examples
%   * varargs function
%   * eval combinators
% \end{itemize}

% \steve{Need to put a bit more structure on this section.  For now I've
%   tried paragraph headers.} \steve{Also could add examples}

\subsection{CBPV Structural Operational Semantics}
Figure~\ref{fig:cbpv} shows the syntax and operational semantics for
our variant of Levy's call-by-push-value calculus~\cite{levy},
which serves as the source language of the machine
equivalence.

As a functional language, CBPV is somewhat lower level and more
structured than ordinary lambda calculus. It syntactically
distinguishes values $\ottnt{V}$, which include variables $\ottmv{x}$ and
suspended computations $\kw{thunk} \, \ottnt{M}$, from general computation terms
$\ottnt{M}$. Computations can $ \kw{force} $ a thunk, produce a value
$\kw{prd} \, \ottnt{V}$, or sequence one computation before another
with $\ottnt{M} \, \kw{to} \, \ottmv{x} \, \kw{in} \, \ottnt{N}$, which also binds the result of $\ottnt{M}$ to the
variable $\ottmv{x}$ for use in $\ottnt{N}$.  Computation forms also include
conditional expressions, and arithmetic calculations, where we use
$ \oplus $ to stand for an arbitrary binary arithmetic
operation.  (In what follows, we write $ \means{\oplus} $ to denote the
semantic counterpart to $ \oplus $.)

Lambda abstractions $ \lambda \ottmv{x} . \ottnt{N} $, unlike in ordinary lambda calculus,
are \textit{not} values; they are computations. CBPV application
$ \ottnt{V} \!\cdot\! \ottnt{M} $ is written in reverse order, with the argument $\ottnt{V}$,
which is syntactically constrained to be a value, to the \textit{left}
of the function $\ottnt{M}$, which should evaluate to an abstraction
$ \lambda \ottmv{x} . \ottnt{N} $. Applications reduce via the usual $\beta$ rule
thereafter. An equivalent reading of these rules is to think of
$ \ottnt{V} \!\cdot\! \ottnt{M} $ as pushing $\ottnt{V}$ onto a stack while $\ottnt{M}$ continues to
compute until the $\beta$ step pops $V$ off the stack for use in the
function body. (This point of view is where call-by-push-value gets
its name.)

 The value--computation distinction is a key feature of the
CBPV design: its evaluation order is completely determined thanks to
syntactic restrictions that ensure there is never a choice between a
substitution step and a congruence rule.

\paragraph{Comparison to Levy's CBPV}

Our variant of CBPV includes a $ \kw{letrec} $ form that binds a bundle
of mutually recursive computation definitions for use in some scope.
Mutually recursive bindings are essential to practical programming
languages, but are often omitted in formal treatments of abstract
machines due to the complexity of dealing with mutually recursive
closures. We deviate from the usual treatment $ \kw{letrec} $, which
treats unrolling it as a computation step and instead define an
auxiliary relation $\leadsto$ that ``virtually'' unrolls $ \kw{letrec} $
in the course of reducing another redex. This treatment allows the SOS
to precisely match the steps of a CFG machine, where the scope of
labels is not explicitly represented, and stepping into scope does not
correspond to a step of execution.

%% CBPV
\begin{figure}[t]
  \[
  \begin{array}[t]{r@{ }l@{\ }r@{\  }ll@{}l}
    \mathrm{Values} \ \ni\ & \ottnt{V} & \bnf\ & \multicolumn{2}{l}{\ottmv{x}
                                             \sep \ottmv{n} \sep \kw{thunk} \, \ottnt{M}} \\

    \mathrm{Terms} \ \ni\ &  \ottnt{M},\ottnt{N} & \bnf\  & 
           \kw{force} \, \ottnt{V} & \sep \kw{letrec} \, \ottmv{x_{{\mathrm{1}}}}  \ottsym{=}  \ottnt{M_{{\mathrm{1}}}}  \ottsym{,} \, .. \, \ottsym{,}  \ottmv{x_{\ottmv{n}}}  \ottsym{=}  \ottnt{M_{\ottmv{n}}} \, \kw{in} \, \ottnt{N} \\
    && |\ & \kw{prd} \, \ottnt{V} & \sep \ottnt{M} \, \kw{to} \, \ottmv{x} \, \kw{in} \, \ottnt{N} \\
    && |\ &  \ottnt{V} \!\cdot\! \ottnt{M}     &  \sep  \lambda \ottmv{x} . \ottnt{M}  \\
    && |\ & \ottnt{V_{{\mathrm{1}}}}  \oplus  \ottnt{V_{{\mathrm{2}}}} & \sep  \kw{if0} \;  \ottnt{V} \;  \ottnt{M_{{\mathrm{1}}}} \;  \ottnt{M_{{\mathrm{2}}}}  \\
  \end{array}
  \]

  \[
  %% \infer{ \kw{prd} \, \ottnt{V} \leadsto \kw{prd} \, \ottnt{V} }{}\quad
  %% \infer{  \lambda \ottmv{x} . \ottnt{M}  \leadsto  \lambda \ottmv{x} . \ottnt{M}  }{}\quad
  \infer{ \ottnt{M} \leadsto \ottnt{M} }{}\quad
  \infer{ \kw{letrec} \, \ottmv{x_{{\mathrm{1}}}}  \ottsym{=}  \ottnt{M_{{\mathrm{1}}}}  \ottsym{,} \, .. \, \ottsym{,}  \ottmv{x_{\ottmv{n}}}  \ottsym{=}  \ottnt{M_{\ottmv{n}}} \, \kw{in} \, \ottnt{N} \leadsto \ottnt{N'} }
        { \{\kw{thunk} \, \ottsym{(}  \kw{letrec} \, \ottmv{x_{{\mathrm{1}}}}  \ottsym{=}  \ottnt{M_{{\mathrm{1}}}}  \ottsym{,} \, .. \, \ottsym{,}  \ottmv{x_{\ottmv{n}}}  \ottsym{=}  \ottnt{M_{\ottmv{n}}} \, \kw{in} \, \ottnt{M_{\ottmv{j}}}  \ottsym{)}/x_j\}_{j=1..n}N \leadsto N' }
  \]

  \begin{gather*}
  \infer{ \kw{force} \, \ottsym{(}  \kw{thunk} \, \ottnt{M}  \ottsym{)} \to \ottnt{M} }{}\quad 
  \infer{ \ottnt{M} \, \kw{to} \, \ottmv{x} \, \kw{in} \, \ottnt{N} \to \ottsym{\{}  \ottnt{V}  \ottsym{/}  \ottmv{x}  \ottsym{\}} \, \ottnt{N} }{\ottnt{M} \leadsto \kw{prd} \, \ottnt{V}}\quad
  \infer{  \ottnt{V} \!\cdot\! \ottnt{M}  \to \ottsym{\{}  \ottnt{V}  \ottsym{/}  \ottmv{x}  \ottsym{\}} \, \ottnt{N} }{\ottnt{M} \leadsto  \lambda \ottmv{x} . \ottnt{N} }\\
  \infer{ \ottnt{M} \, \kw{to} \, \ottmv{x} \, \kw{in} \, \ottnt{N} \to \ottnt{M'} \, \kw{to} \, \ottmv{x} \, \kw{in} \, \ottnt{N} }{\ottnt{M} \to \ottnt{M'}}\quad
  \infer{  \ottnt{V} \!\cdot\! \ottnt{M}  \to  \ottnt{V} \!\cdot\! \ottnt{M'}  }{\ottnt{M} \to \ottnt{M'}}\\
  \infer{ \ottnt{N} \to \ottnt{M} }
        { \ottnt{N} \leadsto \ottnt{N'} & \ottnt{N'} \to \ottnt{M}} \\
  \infer{ \ottnt{M} \, \kw{to} \, \ottmv{x} \, \kw{in} \, \ottnt{N} \to \ottsym{\{}   n_1   \means{\oplus}   n_2   \ottsym{/}  \ottmv{x}  \ottsym{\}} \, \ottnt{N}}{\ottnt{M} \leadsto \ottsym{(}   n_1   \oplus   n_2   \ottsym{)}}\qquad
  \infer{  \kw{if0} \;  \ottsym{0} \;  \ottnt{M_{{\mathrm{1}}}} \;  \ottnt{M_{{\mathrm{2}}}}  \to \ottnt{M_{{\mathrm{1}}}} }{} \qquad
  \infer{  \kw{if0} \;   n  \;  \ottnt{M_{{\mathrm{1}}}} \;  \ottnt{M_{{\mathrm{2}}}}  \to \ottnt{M_{{\mathrm{2}}}} }{}\ ( n \not = 0)
  %% \infer{ \kw{letrec} \, \ottmv{x_{\ottmv{i}}}  \ottsym{=}  \ottnt{M_{\ottmv{i}}} \, \kw{in} \, \ottnt{N} \to \ottnt{N'} }
  %%       { \ottsym{\{}  \kw{thunk} \, \ottsym{(}  \kw{letrec} \, \ottmv{x_{\ottmv{i}}}  \ottsym{=}  \ottnt{M_{\ottmv{i}}} \, \kw{in} \, \ottnt{M_{\ottmv{j}}}  \ottsym{)}  \ottsym{/}  \ottmv{x_{\ottmv{j}}}  \ottsym{\}} \, \ottnt{N} \to \ottnt{N'} }\\
  \end{gather*}

  \caption{Syntax and operational semantics for the CBPV
    language.}
  \label{fig:cbpv}
\end{figure}

This presentation of CBPV, unlike Levy's, is
\textit{untyped}. As a consequence, the operational semantics
presented in Figure~\ref{fig:cbpv} can get stuck trying to evaluate
ill-formed terms.  However, using the type system to rule out stuck
states is completely compatible with the correspondence with abstract
machines---a CBPV term is stuck exactly
when the abstract machine is, so type safety at the CBPV
level corresponds directly to type safety at the SSA level too.  (This
is an example of how results about CBPV can be translated to the
CFG level.) Moreover, using an untyped language makes it easier
to explore the semantics of features such as vararg functions: unlike
in the simply typed variant, it is possible to define a computation
that pops a dynamically computed number of arguments.

\subsection{CEK Machine}
\label{sec:cek}

\begin{figure*}

\[
\begin{array}[t]{r@{\ }l@{\ \ }c@{\ }lll}
  \mathrm{State} \ \ni\ & \sigma & =\  & \mathrm{Term} \times \mathrm{Env}
\times \mathrm{Kont} \\
  %% \mathrm{Term} \ \ni\ & \ottnt{c} & \bnf\ & \ottnt{M} \\
  %% \mathrm{Env} \ \ni\ & \ottnt{e} & =\ & \mathrm{Var}
  %%   \longrightarrow_{\mathrm{fin}} \mathrm{Val} \\
  \mathrm{Env} \ \ni\ & \ottnt{e} & \bnf\ & \cdot
    \sep \ottsym{[}  \ottmv{x}  \mathbin{\mapsto}  \ottnt{v}  \ottsym{]}  \ottnt{e}
    \sep \ottsym{[}  \ottmv{x_{{\mathrm{1}}}}  \mathbin{\mapsto}  \ottnt{M_{{\mathrm{1}}}}  \ottsym{,} \, .. \, \ottsym{,}  \ottmv{x_{\ottmv{n}}}  \mathbin{\mapsto}  \ottnt{M_{\ottmv{n}}}  \ottsym{]}  \ottnt{e}\\
  \mathrm{Kont} \ \ni\ & k & \bnf\ & \cdot \sep  \ottnt{v}  \mathbin{\cdot} \, \underline{\ \ } :: \ottnt{k} 
    \sep  \ottsym{[} \, \underline{\ \ } \, \kw{to} \, \ottmv{x} \, \kw{in} \, \ottnt{M}  \ottsym{,}  \ottnt{e}  \ottsym{]} :: \ottnt{k}  \\
  %% TODO: errors?
  \mathrm{DVal} \ \ni\ & v & \bnf\ & \ottmv{x} \sep \ottsym{[}  \ottnt{M}  \ottsym{,}  \ottnt{e}  \ottsym{]}
\end{array}
\]

\[
\begin{array}{rcl}
  \gamma & : & \mathrm{Val} \rightarrow \mathrm{Env} \rightarrow \mathrm{DVal} \\
   \gamma\;  \ottmv{x} \;  \ottnt{e}  & = & \ottnt{e}  \ottsym{(}  \ottmv{x}  \ottsym{)} \\
  %%  \gamma\;   n  \;  \ottnt{e}  & = & \ottmv{n} \\
   \gamma\;  \ottsym{(}  \kw{thunk} \, \ottnt{M}  \ottsym{)} \;  \ottnt{e}  & = & \ottsym{[}  \ottnt{M}  \ottsym{,}  \ottnt{e}  \ottsym{]} \\
\end{array}
\]

%% \tiny
\begin{gather*}
\infer
    {  \langle \kw{force} \, \ottnt{V}  ,\;  \ottnt{e}  ,\;  \ottnt{k} \rangle 
       \longrightarrow 
       \langle \ottnt{M}  ,\;  \ottnt{e'}  ,\;  \ottnt{k} \rangle  }
    {  \gamma\;  \ottnt{V} \;  \ottnt{e}  = \ottsym{[}  \ottnt{M}  \ottsym{,}  \ottnt{e'}  \ottsym{]} }
\quad
\infer
    {  \langle  \lambda \ottmv{x} . \ottnt{M}   ,\;  \ottnt{e}  ,\;   \ottsym{(}  \ottnt{v}  \mathbin{\cdot} \, \underline{\ \ }  \ottsym{)} :: \ottnt{k}  \rangle 
       \longrightarrow 
       \langle \ottnt{M}  ,\;  \ottsym{[}  \ottmv{x}  \mathbin{\mapsto}  \ottnt{v}  \ottsym{]}  \ottnt{e}  ,\;  \ottnt{k} \rangle  }
    { }
\\
\infer 
    {  \langle \kw{prd} \, \ottnt{V}  ,\;  \ottnt{e}  ,\;   \ottsym{[} \, \underline{\ \ } \, \kw{to} \, \ottmv{x} \, \kw{in} \, \ottnt{M}  \ottsym{,}  \ottnt{e'}  \ottsym{]} :: \ottnt{k}  \rangle  
       \longrightarrow 
       \langle \ottnt{M}  ,\;  \ottsym{[}  \ottmv{x}  \mathbin{\mapsto}  \ottnt{v}  \ottsym{]}  \ottnt{e'}  ,\;  \ottnt{k} \rangle  }
    {  \gamma\;  \ottnt{V} \;  \ottnt{e}  = \ottnt{v} }
\\
\infer
    {  \langle  \ottnt{V} \!\cdot\! \ottnt{M}   ,\;  \ottnt{e}  ,\;  \ottnt{k} \rangle 
       \longrightarrow  \sigma }
    {  \gamma\;  \ottnt{V} \;  \ottnt{e}  = \ottnt{v} &
       \langle \ottnt{M}  ,\;  \ottnt{e}  ,\;   \ottsym{(}  \ottnt{v}  \mathbin{\cdot} \, \underline{\ \ }  \ottsym{)} :: \ottnt{k}  \rangle 
       \longrightarrow  \sigma }
\quad
\infer
    {  \langle \ottnt{M} \, \kw{to} \, \ottmv{x} \, \kw{in} \, \ottnt{N}  ,\;  \ottnt{e}  ,\;  \ottnt{k} \rangle 
       \longrightarrow  \sigma }
    {  \langle \ottnt{M}  ,\;  \ottnt{e}  ,\;   \ottsym{[} \, \underline{\ \ } \, \kw{to} \, \ottmv{x} \, \kw{in} \, \ottnt{N}  \ottsym{,}  \ottnt{e}  \ottsym{]} :: \ottnt{k}  \rangle 
       \longrightarrow  \sigma }
\\
\infer
    {  \langle \kw{letrec} \, \ottmv{x_{\ottmv{i}}}  \ottsym{=}  \ottnt{M_{\ottmv{i}}} \, \kw{in} \, \ottnt{N}  ,\;  \ottnt{e}  ,\;  \ottnt{k} \rangle 
       \longrightarrow  \sigma }
    {  \langle \ottnt{N}  ,\;  \ottsym{[}  \ottmv{x_{\ottmv{j}}}  \mathbin{\mapsto}  \kw{letrec} \, \ottmv{x_{\ottmv{i}}}  \ottsym{=}  \ottnt{M_{\ottmv{i}}} \, \kw{in} \, \ottnt{M_{\ottmv{j}}}  \ottsym{]}  \ottnt{e}  ,\;  \ottnt{k} \rangle 
       \longrightarrow  \sigma }
\end{gather*}

  \caption{CEK Machine}
  \label{fig:cek}
\end{figure*}

As the first step toward control flow graphs, we define a CEK machine~\cite{cek} for CBPV terms, the syntax of which is
given in Figure ~\ref{fig:cek}. States of a CEK machine are triples $ \langle \ottnt{c}  ,\;  \ottnt{e}  ,\;  \ottnt{k} \rangle $ where $\ottnt{c}$ is a CBPV term (here we use the metavariable
$\ottnt{c}$ rather than $\ottnt{M}$ to emphasize that this term lives at the CEK
level), $\ottnt{e}$ is an environment containing variable bindings, and $\ottnt{k}$
is a stack of continuation frames. CEK values $v$ include variables,
number constants, and, unlike in CBPV, \textit{closures} consisting of the
code of a thunk and a captured environment, written $\ottsym{[}  \ottnt{M}  \ottsym{,}  \ottnt{e}  \ottsym{]}$. In CBPV,
there are two search rules in the SOS, and so we have two kinds of frames:
application and sequence, each written with a ``hole'' $\underline{\ \ }$ in the place
where the sub-expression's produced value will be placed.

Our CEK machine differs from the usual definition in two significant
ways: environments have more structure than the usual map of variables
to values, and we include recursive rules for stepping under
applications, sequencing, and recursive bindings. When execution
encounters a $ \kw{letrec} $, the environment is extended with the
associated terms without creating closures. The intuition is that,
since within the scope of the variables being bound the environment
can only be extended, it suffices to mark the environment instead of
making explicit copies in a closure. The required closures can be
created on-demand when the associated variable is referenced by the
lookup operation $\ottnt{e}  \ottsym{(}  \ottmv{x}  \ottsym{)}$. This requires the environment to have
additional structure: it is a sequence of bindings rather than a
simple mapping. The major benefit is that forcing a variable bound by
$ \kw{letrec} $ never allocates an intermediate closure.

The definition of the CEK machine also typically executes the
search rules of the SOS as a series of smaller steps. Executing under
an application, for example, pushes an argument frame onto the stack
in a distinct step of the machine. The reasoning is that recursive
rules are hard to implement in hardware or an efficient subset of a
low-level language like C. We take a different approach and leave the
recursive rules, relying on a subsequent partial evaluation step to
produce an easily-implementable machine. A series of applications
will, for example, correspond to a multi-argument call instruction.

Figure~\ref{fig:cek} gives the operational sematics of the CEK machine. It
relies on a function $\gamma$, shown in Figure~\ref{fig:cek} for looking
up values in the environment, which constructs closures as necessary. We
write $\ottnt{e}  \ottsym{(}  \ottmv{x}  \ottsym{)}$ for the value associated with variable $\ottmv{x}$ in
environment $\ottnt{e}$, and $\ottsym{[}  \ottmv{x}  \mathbin{\mapsto}  \ottnt{v}  \ottsym{]}  \ottnt{e}$ for the environment $\ottnt{e}$ updated
to map $\ottmv{x}$ to $\ottnt{v}$. Each search construct pushes a corresponding
frame onto the stack, while the corresponding reduction step pops the
frame off the stack and uses it as a context for reduction.

\paragraph{Unloading CEK Machine States and Correctness.}
The correctness of a CEK machine with respect to the SOS of the source
language is usually stated in terms of the associated evaluation
functions for closed terms terminating at non-function constants. The
proof involves a simulation argument using a mapping between machine
states and terms. Since we have ensured that the steps of our SOS and
CEK machine line up exactly, we can prove a stronger property stated
directly in terms of the functions $ \kw{CEK.load} $ and $ \kw{CEK.unload} $,
\footnote{We elide the
  formal definition of \kw{unload} for the CEK machine which is
  more-or-less standard, and instead present the similar but more
  complicated function for the PEAK machine in
  Figure~\ref{fig:pekunlfuns}.}
mapping terms to CEK states and vice versa:

\begin{lemma} (CEK Correctness) For all CEK states $\sigma$, 
\[
 \kw{CEK.unload} (\kw{CEK.load} \, \ottnt{M}) = M
\]
\[
 \kw{CBPV.step} (\kw{CEK.unload} \, \sigma) = 
 \kw{CEK.unload} (\kw{CEK.step} \, \sigma)
\]
\end{lemma}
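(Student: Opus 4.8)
The plan is to prove the two equations separately. The first, $\kw{CEK.unload}(\kw{CEK.load}\,M) = M$, is the easy one: since $\kw{CEK.load}\,M = \langle M, \cdot, \cdot\rangle$ places $M$ in the empty environment under the empty continuation, it suffices to observe that $\kw{CEK.unload}$ applied to an empty environment performs the identity substitution and that the empty stack denotes the trivial evaluation context. Both facts follow by unfolding the definition of $\kw{CEK.unload}$ (or by a one-line induction on the continuation if unload is defined by recursion on it).

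The second equation is the heart of the lemma, and I would prove it by well-founded induction on the size of the focus term $c$ of the state $\sigma = \langle c, e, k\rangle$, with a case analysis on $c$ mirroring the transition rules of \Figure{cek}. The measure is justified because every recursive clause of $\kw{CEK.step}$ --- the application-search, sequence-search, and $\kw{letrec}$ rules --- recurses on a strictly smaller focus (respectively $M$, $M$, and $N$) while only growing the environment or the stack. Before the case analysis I would establish three auxiliary lemmas that decouple $\kw{CEK.unload}$ into its moving parts: (i) a compositionality lemma $\kw{CEK.unload}\langle c, e, k\rangle = \mathcal{K}[k]\bigl[\mathcal{E}[e]\,c\bigr]$, separating the environment substitution $\mathcal{E}[e]$ from the reification $\mathcal{K}[k]$ of the continuation into a composition of application and sequencing contexts; (ii) a substitution lemma relating environment extension to CBPV substitution, in particular that extending $e$ with $[x_j \mapsto \kw{letrec}\,\ldots\,\kw{in}\,M_j]$ corresponds exactly to substituting $\kw{thunk}(\kw{letrec}\,\ldots\,\kw{in}\,M_j)$ for $x_j$; and (iii) a lemma stating that $\gamma\,V\,e$, reified as a value, equals $\mathcal{E}[e]\,V$, so that a dynamic closure unloads to the corresponding $\kw{thunk}$.

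With these in hand the case analysis proceeds uniformly. For the three ``real'' redexes --- $\kw{force}\,V$, a $\lambda$ facing an application frame, and $\kw{prd}\,V$ facing a sequencing frame, together with the arithmetic and $\kw{if0}$ cases --- I would unfold both sides: lemma (iii) turns the reified focus into exactly the CBPV redex (e.g.\ $\kw{force}(\kw{thunk}\,\ldots)$), and because the stack frames denote precisely the contexts in which the App and Seq congruence rules of \Figure{cbpv} permit reduction, the single step $\kw{CBPV.step}$ takes on $\mathcal{K}[k][\,\text{redex}\,]$ agrees with the reduction computed on the focus. Here I rely on determinism of the CBPV SOS --- its redex position is unique and sits exactly where the machine focuses --- to know that $\kw{CBPV.step}$ does not fire elsewhere. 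For the two search rules the claim is immediate from the induction hypothesis applied to the smaller focus, once compositionality (i) is used to reconcile the freshly pushed frame with the context $\mathcal{K}[\,\cdot\,]$.

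The main obstacle is the $\kw{letrec}$ case, and it is precisely where the non-standard design pays off. On the machine side, $\kw{CEK.step}$ silently extends the environment and recurses, taking no observable step of its own; on the term side, the unloaded focus is a $\kw{letrec}$ that the CBPV SOS can reduce only by first firing the virtual-unrolling rule of \Figure{cbpv}, which lets a term $t$ step to $t''$ exactly when $t \leadsto t'$ and $t' \to t''$. The crux is to show that one layer of $\leadsto$ performs precisely the simultaneous substitution of the thunked recursive definitions that lemma (ii) assigns to the environment extension, so that $\kw{CEK.unload}$ of the extended state equals the $\leadsto$-reduct of the original; nested $\kw{letrec}$s, handled in the machine by repeated environment extension, then correspond to the recursive unfolding built into $\leadsto$, and the induction hypothesis on the strictly smaller body $N$ closes the case. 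Getting the structured, sequence-shaped environment --- with its on-demand closure construction via $\gamma$ --- to line up with ordinary capture-avoiding substitution in lemma (ii), uniformly across the plain, dynamic-value, and $\kw{letrec}$ forms of environment extension, is the most delicate bookkeeping in the argument; the stuck cases are then dispatched uniformly by noting that $\kw{CBPV.step}$ and $\kw{CEK.step}$ fail to progress on exactly the same configurations.
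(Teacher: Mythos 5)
Your proposal is correct and follows essentially the same route as the paper, which only sketches the argument in prose (the full proof is in Coq): unloading as delayed substitution plus context reconstruction from frames, lockstep case analysis along the recursive step rules, and the key \kw{letrec} bookkeeping of ``flattening'' the environment into \kw{thunk}-of-\kw{letrec} substitutions so that the silent environment extension matches the virtual unrolling relation $\leadsto$. Your auxiliary lemmas (i)--(iii) and the well-founded induction on the focus term are exactly the decomposition the paper's formalization relies on, so there is nothing further to add.
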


Loading a term into a machine state maps a CBPV term $M$ to the CEK
state $ \langle \ottnt{c}  ,\;  \cdot  ,\;  \cdot \rangle $, while unloading carries out the delayed
substitutions in the environment and reconstructs the context from the
continuation frames. Since our environments contain both values and
$ \kw{letrec} $ bindings, we must first recover the associated closures
by ``flattening'' the environment, making the explicit copies for
closures that were elided during execution. Closures are then unloaded
recursively into terms in the usual way.

All of our machine correctness results follow the above form. The
first equation guarantees that states of the ``more abstract'' machine
correspond to sets of states of the ``more concrete'' machine. The
second equation shows that, in fact, the transition system defined by
the more concrete machine, considered up to the equivalence relation
induced by the first equation, is the same one as the more abstract
machine.

%% Local Variables:
%% fill-column: 70
%% eval: (auto-fill-mode)
%% eval: (flyspell-mode)
%% eval: (setq-local auto-hscroll-mode nil)
%% eval: (setq-local sentence-end-double-space nil)
%% End:

\section {Proving Machine Equivalence}
\label{sec:machines}
\label{sec:proof}

\begin{figure}[t]
\[
\small
%% \resizebox{\textwidth}{!}{%
%% \tiny
\xymatrix@C=14pt{
\mbox{CBPV} & & M \ar[rrrrrr]^{\kw{CBPV.step}} && && && M' \\
\mbox{CEK}  
& 
& \sigma \ar@{->>}[u] \ar[rrrrrr]^{\kw{CEK.step}}
&& && && \sigma' \ar@{->>}[u]_{\kw{CEK.unload}} \\
\mbox{PEAK}  
&   P \vdash 
& \rho \ar@{->>}[u] \ar[rrrrrr]^{\kw{PEAK.step}(P)}
&& && && \rho' \ar@{->>}[u]_{\kw{PEAK.unload}(P)} \\ 
\mbox{PEK}  
&   P \vdash \ar[d]_{\kw{compile}}
& s \ar@{->>}[u] \ar[rrrrrr]^{\kw{PEK.step}(P)}
&& && && s' \ar@{->>}[u]_{\kw{PEK.unload}(P)} \\ 
\mbox{CFG}   
&   G \vdash 
& s \ar@{->>}[u] \ar[rrrrrr]^{\kw{CFG.step}(G)}
&& && && {s'} \ar@{->>}[u]_{\kw{CFG.unload}(P)} \\ 
}
%% }
\]
  \caption{Structure of the machine correspondences.}
  \label{fig:proofs}
\end{figure}
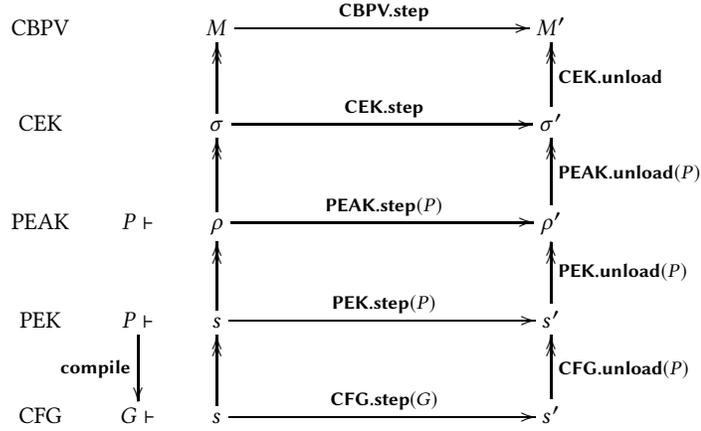

%% \resizebox{\textwidth}{!}{%
%% \tiny
%% \xymatrix@C=14pt{
%% \mbox{CBPV} & & M \ar[rrrrrr]^{\kw{CBPV.step}} && && && M' \\
%% \mbox{CEK}  
%% & 
%% &  \langle \ottnt{c}  ,\;  \ottnt{e}  ,\;  \ottnt{k} \rangle  \ar@{->>}[u] \ar[rrrrrr]^{\kw{CEK.step}}
%% && && && \sigma \ar@{->>}[u]_{\kw{CEK.unload}} \\
%% \mbox{PEAK}  
%% &   P \vdash 
%% &  \langle \ottnt{p}  ,\;  e  ,\;  a  ,\;  k \rangle  \ar@{->>}[u] \ar[rrrrrr]^{\kw{PEAK.step}(P)}
%% && && && \rho \ar@{->>}[u]_{\kw{PEAK.unload}(P)} \\ 
%% \mbox{PEK}  
%% %% & P \vdash
%% &   P \vdash \ar[d]^{\kw{compile}}
%%  %% \ar@/_1.5pc/[dd]|<>(0.8){\texttt{\textcolor{red}{<<no parses (char 8): compile1*** >>}}} 
%% &  \langle \ottnt{p}  ,\;  e  ,\;  k \rangle  \ar@{->>}[u] \ar[rrrrrr]^{\kw{PEK.step}(P)}
%% %% && *++{\rho_1} \ar@{_{(}->}[u] \ar[rr] 
%% %% && *++{\rho_2} \ar@{_{(}->}[u] \ar@{-->}[rr]
%% && && && s \ar@{->>}[u]_{\kw{PEK.unload}(P)} \\ 
%% \mbox{CFG}   
%% &   G \vdash 
%% &  \langle \ottnt{p}  ,\;  e  ,\;  k \rangle  \ar@{->>}[u] \ar[rrrrrr]^{\kw{CFG.step}(G)}
%% %% && *++{s_1} \ar@{_{(}->}[u] \ar[rr] 
%% %% && *++{s_2} \ar@{_{(}->}[u] \ar@{-->}[rr] 
%% && && && *++{s} \ar@{->>}[u]_{\kw{CFG.unload}(P)} \\ 
%% %% \mbox{CFG} 
%% %% & \mathit{cfg_1} \vdash 
%% %% &  \langle \ottnt{p}  ,\;  e  ,\;  k \rangle  \ar[rrrrrr]^{ \kw{CFG.step} }  \ar@{_{(}->}[u]|{\kw{unload}}
%% %% && && &&  \langle \ottnt{p'}  ,\;  e'  ,\;  k' \rangle  \ar@{_{(}->}[u]|{\kw{unload}} \\ 
%% }}

Figure~\ref{fig:proofs} shows the structure of the machine
correspondences proved in this paper. We translate call-by-push-value
terms through a series of abstract machines, eventually reaching an
SSA CFG language. At each level, we prove a correctness lemma that relates the
behavior of states of the current machine to the behavior of
corresponding states in the previous machine. For most of these
levels, the correctness statement is lockstep: each step in the
current machine matches exactly one step in the previous machine. At
the top and bottom levels, multiple steps in the intermediate machines
may correspond to a single step in a CBPV term or a CFG machine. The
relation is something more precise than stuttering simulation,
however; by examining the state involved, we can compute precisely the
number of intermediate-machine steps that will be required to make up
a single CBPV/CFG step.

More specifically, we define a function \kw{unload} that translates
CFG states into CBPV terms, defined as the composition of simpler \kw{unload}
functions between each of the layers. We then define a function
\kw{depth} such that each CFG machine step of a state $s$ corresponds
to exactly $\kw{depth}(s)$ steps of the last intermediate machine, 
and $\kw{depth}(s)$ steps of the first intermediate machine
(the CEK machine of the previous section) correspond in turn to one step of $\kw{unload}(s)$.
(In fact, the function $\kw{depth}$ is precisely the redex depth of the term being processed,
which we have already seen gives us the number of steps of the CEK machine that make up one step of the SOS.) Since the
steps of each intermediate machine are in lockstep correspondence, we
can conclude that each step of the CFG machine corresponds exactly to
a step of the corresponding CBPV term.

This conclusion still leaves open the possibility that not every CBPV term
corresponds to a CFG state. Fortunately, it is easy to construct a state
corresponding to any given CBPV term. We can define a function \kw{load}
at each level that translates a CBPV term into a machine state. For instance,
we can load a CBPV term $\ottnt{M}$ into the CEK machine state $ \langle \ottnt{M}  ,\;  \cdot  ,\;  \cdot \rangle $.
At every level, it is the case that $\kw{unload}(\kw{load}(M)) = M$,
guaranteeing that every CBPV term has at least one corresponding CFG state.

We also define a notion of well-formedness at each level. The machines may
contain states that do not correspond to CBPV terms; the well-formedness
property rules out these states, by translating the scoping constraints of
terms into constraints on machine states. We show that initial states
produced by loading are well-formed, well-formedness is preserved by
steps of the machine, and well-formed states unload into terms. Together,
these properties suffice to guarantee that starting from a CBPV term,
we only encounter machine states that correspond to CBPV terms.

\subsection{Machine equivalence}

\todo{Dmitri should read this section.}
\steve{Per our phone call:  We should foreshadow a deeper discussion
  of applications/ equivalences that will appear in a later section.}
The net result is stated in Theorem~\ref{compile-correct}.

\begin{theorem}\label{compile-correct} (CBPV/CFG Machine Equivalence) For
  all CBPV  terms $\ottnt{P}$ and CFG states $s$ that are well-formed with respect
  to $\ottnt{P}$, %(under the improved compilation strategy) 
  \[
  \kw{CBPV.step} \, \ottsym{(}  \kw{unload} \, \ottnt{P} \, s  \ottsym{)} = \kw{unload} \, \ottnt{P} \, \ottsym{(}  \kw{CFG.step} \, \ottsym{(}  \kw{compile} \, \ottnt{P}  \ottsym{)} \, s  \ottsym{)}
  \]
\end{theorem}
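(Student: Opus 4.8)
The plan is to obtain the outer commuting square of \Figure{proofs} by vertically pasting the four inner squares, one per machine layer, after expanding the composite unload. Concretely, $\kw{unload}\, P$ factors as
\[
\kw{unload}\, P\, s \;=\; \kw{CEK.unload}\,\bigl(\kw{PEAK.unload}\, P\,(\kw{PEK.unload}\, P\,(\kw{CFG.unload}\, P\, s))\bigr),
\]
so proving the goal reduces to transporting a single $\kw{CFG.step}$ up through the four unload maps and showing it arrives as a single $\kw{CBPV.step}$. The three interior machines (CEK, PEAK, PEK) step in lockstep with one another, whereas the two outer edges each relate one endpoint step to $\kw{depth}(s)$ intermediate steps, so the whole argument amounts to bookkeeping over $\kw{depth}(s)$.

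I would run the chain starting from the concrete side. First, the CFG-level lemma rewrites one instruction firing as a fixed number of micro-steps of the last intermediate machine: $\kw{CFG.unload}\, P\,(\kw{CFG.step}\,(\kw{compile}\, P)\, s) = (\kw{PEK.step}\, P)^{\kw{depth}(s)}(\kw{CFG.unload}\, P\, s)$. Next, the two lockstep lemmas (PEK over PEAK, and PEAK over CEK) are lifted from single steps to $\kw{depth}(s)$-fold iterates by a routine induction on the iteration count, carrying these $\kw{depth}(s)$ steps up to $\kw{depth}(s)$ steps of the CEK machine under the successive unloads. Finally, the CEK correctness lemma of \Section{cek} re-aggregates those $\kw{depth}(s)$ CEK steps into exactly one SOS reduction applied to $\kw{CEK.unload}$ of the lifted state, which is precisely $\kw{CBPV.step}(\kw{unload}\, P\, s)$, as summarized in \Section{proof}.

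Two side conditions must be threaded through. First, the unload maps are total only on well-formed states, so I would maintain a per-layer well-formedness invariant, established for states produced by $\kw{load}$, preserved by each machine's step, and strong enough to guarantee that every intermediate state visited in the chain unloads successfully. Second, the single count $\kw{depth}(s)$ used at the CFG level must equal the count demanded at the CBPV/CEK level; this requires a lemma that $\kw{depth}$ is invariant under the unload maps, so that the redex depth read off at the bottom matches the number of intermediate steps consumed at the top.

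The main obstacle is exactly these two outer, non-lockstep squares. The interior of the diagram is easy once the single-step lockstep lemmas are available, since iterating them is a one-line induction. The difficulty concentrates in (i) the de-aggregation at the CFG edge, where one instruction must be shown to unfold to precisely $\kw{depth}(s)$ search and congruence micro-steps that the CFG machine performs implicitly, and (ii) the matching re-aggregation at the CBPV edge, where those same $\kw{depth}(s)$ steps collapse to one reduction, with both edges forced to agree on the count via the $\kw{depth}$-invariance lemma. Arranging the definitions of $\kw{depth}$ and of the layer unloads so that these counts coincide is the delicate part; the well-formedness preservation, by contrast, is comparatively mechanical.
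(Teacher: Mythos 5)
Your overall skeleton --- pasting the per-layer squares of Figure~\ref{fig:proofs} and threading a load-established, step-preserved well-formedness invariant so that every intermediate unload is defined --- is exactly how the paper proves the theorem. But the depth-counting machinery you install at the two outer edges contradicts the per-machine lemmas the paper actually composes, and the proof as you describe it would fail there. The paper's CFG correctness lemma is not a one-to-$\kw{depth}(s)$ de-aggregation: CFG states \emph{are} PEK states (there is no nontrivial $\kw{CFG.unload}$ to factor through), and the lemma asserts $\kw{PEK.step} \, \ottnt{P} \, s = \kw{CFG.step} \, \ottsym{(}  \kw{compile} \, \ottnt{P}  \ottsym{)} \, s$ on the nose, so one CFG step is exactly one PEK step. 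Symmetrically, the CEK correctness lemma is single-step on both sides, $\kw{CBPV.step} (\kw{CEK.unload} \, \sigma) = \kw{CEK.unload} (\kw{CEK.step} \, \sigma)$: the paper deliberately keeps the SOS search rules as \emph{recursive premises} inside a single CEK/PEAK transition (and compiles that recursion into $\eta$ and $\kw{aframes}$ in the PEK machine) precisely so that every machine in the chain is in lockstep with the SOS. Consequently your two key intermediate claims --- that one CFG step unloads to $(\kw{PEK.step}\,\ottnt{P})^{\kw{depth}(s)}$, and that $\kw{depth}(s)$ CEK steps re-aggregate into one SOS reduction --- are both false for these machines; worse, if you ran your bookkeeping against the lemmas as actually stated, the counts would not close: $\kw{depth}(s)$ PEK steps lift to $\kw{depth}(s)$ CEK steps, which the CEK lemma maps to $\kw{depth}(s)$ SOS steps, not one.

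The $\kw{depth}$ function of Section~\ref{sec:proof}, which you treat as the load-bearing invariant requiring an invariance-under-unload lemma, instead measures the redex depth absorbed \emph{within} a single transition (the recursion depth of the search premises); it is not a transition count to be iterated, so no iteration of lockstep lemmas and no $\kw{depth}$-preservation lemma are needed. The proof is the direct four-fold equational composition: rewrite $\kw{CFG.step} \, \ottsym{(}  \kw{compile} \, \ottnt{P}  \ottsym{)}$ to $\kw{PEK.step} \, \ottnt{P}$, then push the single step up through $\kw{PEK.unload}$, $\kw{PEAK.unload}$, and $\kw{CEK.unload}$ using the three single-step commutation lemmas, discharging each lemma's well-formedness hypothesis along the way. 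In short, the ``main obstacle'' you locate at the outer edges was eliminated by the machine design itself, and what remains is exactly the part you correctly flagged as mechanical.
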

\begin{proof}By combining the per-machine correctness lemmas, as shown in Figure~\ref{fig:proofs}.\end{proof}

This theorem gives strong reasoning properties about the connection
between CBPV terms and their corresponding machine semantics.  As one
illustration of its use, consider an equivalence on closed CBPV terms
$\ottnt{M_{{\mathrm{1}}}}$ and $\ottnt{M_{{\mathrm{2}}}}$ such that
\[
\ottnt{M_{{\mathrm{1}}}} \equiv \ottnt{M_{{\mathrm{2}}}} \qquad \mathit{iff} \qquad  (\ottnt{M_{{\mathrm{1}}}} \to^*
\kw{prd} \, \ottsym{0} \AND \ottnt{M_{{\mathrm{2}}}} \to^* \kw{prd} \, \ottsym{0})
\]
We can define what it means for two CFG programs to be equivalent by 
\[
\mathit{cfg}_{{\mathrm{1}}} \equiv' \mathit{cfg}_{{\mathrm{2}}} \quad \mathit{iff} 
\left\{
  \begin{array}{l}
    \mathit{cfg}_{{\mathrm{1}}} \vdash s_{{\mathrm{0}}} \to^* s_{{\mathrm{1}}} \not \to \\
    \mathit{cfg}_{{\mathrm{2}}} \vdash s_{{\mathrm{0}}} \to^* s_{{\mathrm{2}}} \not \to \\
    \mbox{and}\ \kw{ret}(s_{{\mathrm{1}}}) = \kw{0} = \kw{ret}(s_{{\mathrm{2}}})
  \end{array}
\right.
\]
\noindent Here, the function $\kw{ret}(s)$ examines a CFG machine state to
determine whether it has terminated at a $ \kwl{RET} $ instruction whose
operand evaluates to \kw{0} in the state $s$, where $s_{{\mathrm{0}}}$ denotes
the initial CFG state.  An immediate corollary of
Theorem~\ref{compile-correct} is:
\[
\ottnt{M_{{\mathrm{1}}}} \equiv \ottnt{M_{{\mathrm{2}}}} \IMPLIES (\kw{compile} \, \ottnt{M_{{\mathrm{1}}}} \equiv' \kw{compile} \, \ottnt{M_{{\mathrm{2}}}})
\]
\noindent Note that if $\ottnt{M_{{\mathrm{1}}}} \to \ottnt{M'_{{\mathrm{1}}}}$ then $\ottnt{M_{{\mathrm{1}}}} \equiv
\ottnt{M'_{{\mathrm{1}}}}$, so we immediately conclude that $\kw{compile} \, \ottnt{M_{{\mathrm{1}}}} \equiv'
\kw{compile} \, \ottnt{M'_{{\mathrm{1}}}}$.  Many other optimizations also fall into this style
of reasoning.

%% Importantly, the machine equivalence theorem holds even for
%% \textit{open} CBPV terms.   We conjecture that this means that
%% reasoning similar to the above can be generalized to contextual
%% equivalences as well.

% The CBPV language is naturally equipped with a structural operational
% semantics defined in terms of substitution along the lines of the
% usual lambda calculus.  This gives rise to a rich equational theory. 

% CFG machine has a precise semantics in terms of an environment and
% stack machine.  

% Key result: show that the CBPV language and the CFG machine define
% \textit{identical} transition systems.

%% Local Variables:
%% fill-column: 70
%% eval: (auto-fill-mode)
%% eval: (flyspell-mode)
%% eval: (setq-local auto-hscroll-mode nil)
%% eval: (setq-local sentence-end-double-space nil)
%% End:

\section{The PEAK Machine}
\label{sec:peak}
%% Rationale
The CEK machine gives a straightforward computation model for CBPV,
but it falls short of the execution model
of our desired CFG formalism. We call the next step the PEAK machine.

The principle underlying the PEAK machine's execution model is that
the CEK machine does unnecessary work by saving an environment
immediately when an expression is evaluated. By delaying closure
allocation until the environment is actually changed, we avoid some
paired save/restore operations. Furthermore, using an alternative
representation of environments allows the machine to continue to add
bindings to the environment as as long as the result of substitution
does not change. As a result, the PEAK machine can express
executions that contain control flow join
points, like the CFG program in Figure~\ref{fig:ssacfg}, without the need to save and restore an environment.

%% PEAK

\begin{figure}
%  
%% TODO: integrate into section; align some seps; all sorts of syntax changes
\[
\begin{array}[t]{r@{\ }l@{\ \ }c@{\ }lll}
  \mathrm{State} \ \ni\ & \rho & =\  & \mathrm{Path} \times \mathrm{Env}
    \times \mathrm{Args} \times \mathrm{Kont} \\
  \mathrm{Path} \ \ni\ & \ottnt{p}, \ottnt{q} & \bnf\ & \cdot \sep  \ottmv{n}  \CONS  \ottnt{p}  \\
  \mathrm{Env} \ \ni\ & e & =\ & \mathrm{Path}
    \longrightarrow_{\mathrm{fin}} \mathrm{Val} \\
  \mathrm{Args} \ \ni\ & a & \bnf\ & \cdot \sep   \kwl{ARG}\; \ottnt{p}   ::  a 
    \sep  \kwl{SEQ} \, \ottnt{p}  ::  a  \\
  \mathrm{Kont} \ \ni\ & k & \bnf\ & \cdot \sep  v  \mathbin{\cdot} \, \underline{\ \ } :: k 
    \sep  \ottsym{[}  \ottnt{p}  \ottsym{,}  e  \ottsym{,}  a  \ottsym{]} :: k  \\
  %% TODO: errors?
  \mathrm{Val} \ \ni\ & v & \bnf\ & \ottmv{x} \sep \ottmv{n} \sep \ottsym{[}  \ottnt{p}  \ottsym{,}  e  \ottsym{]}
\end{array}
\]

  \[
  \begin{array}{rcll}
    \gamma & : & \mathrm{Path} \rightarrow \mathrm{Env} \rightarrow \mathrm{DVal} \\
    \gamma \, \ottnt{p} \, e & = & \mathrm{lookup}(\ottnt{P}, \ottnt{p}, \ottnt{e}, \ottmv{x})
                          & \mbox{when $ \ottnt{P}  [  \ottnt{p}  ]  = \ottmv{x}$}

    \\
    %% \gamma \, \ottnt{p} \, e & = & e  \ottsym{(}  \ottnt{p'}  \ottsym{)}
    %%                       & \mbox{when $ \ottnt{P}  [  \ottnt{p}  ]  = \ottmv{x}$,}\\
    %%                 & & &  \mbox{\quad $\ottmv{x}$ bound by $ \ottnt{P}  [  \ottnt{p'}  ]  =  \lambda \ottmv{x} . \underline{\ \ }  $ or} \\
    %%                 & & & \mbox{\quad by $ \ottnt{P}  [  \ottnt{p'}  ]  = \underline{\ \ } \, \kw{to} \, \ottmv{x} \, \kw{in} \, \underline{\ \ } $} 

    %% \\
    %% \gamma \, \ottnt{p} \, e & = & \ottsym{[}   \ottsym{0}  \CONS  \ottnt{p'}   \ottsym{,}  e  \ottsym{]}
    %%                     & \mbox{when $ \ottnt{P}  [  \ottnt{p}  ]  = \ottmv{x}$, $\ottmv{x}$ bound} \\
    %%                     &  & & \mbox{\quad by $ \ottnt{P}  [  \ottnt{p'}  ]  = \texttt{\textcolor{red}{<<no parses (char 1): r***ec x. \_ >>}}$} 

    %% \\                            
    \gamma \, \ottnt{p} \, e & = & \ottsym{[}   \ottsym{0}  \CONS  \ottnt{p}   \ottsym{,}  e  \ottsym{]} 
                        & \mbox{when $ \ottnt{P}  [  \ottnt{p}  ]  = \kw{thunk} \, \underline{\ \ }$}

    \\
     \gamma \, \ottnt{p} \, e & = & \ottmv{n} 
                         & \mbox{when $ \ottnt{P}  [  \ottnt{p}  ]  = \ottmv{n}$} 
\end{array}
\]
\[
  \begin{array}{rcll}
    \delta & : & \mathrm{Env} \rightarrow \mathrm{Args} \rightarrow \mathrm{Kont} \\
     \delta\;  e  \;  \cdot  & = & . \\
     \delta\;  e  \;  \ottsym{(}    \kwl{ARG}\; \ottnt{p}   ::  a   \ottsym{)}  & = &  \ottsym{(}  \gamma \, \ottsym{(}   \ottsym{0}  \CONS  \ottnt{p}   \ottsym{)} \, e  \mathbin{\cdot} \, \underline{\ \ }  \ottsym{)} ::  \delta\;  e  \;  a   \\
     \delta\;  e  \;  \ottsym{(}   \kwl{SEQ} \, \ottnt{p}  ::  a   \ottsym{)}  & = &  \ottsym{[}  \ottnt{p}  \ottsym{,}  e  \ottsym{,}  a  \ottsym{]} :: \cdot  \\                                     
    %%  \delta\;  e  \;  \ottsym{(}   \kwl{PRJ} \, \ottmv{i}  ::  a   \ottsym{)}  & = &   \kw{prj}_{ \ottmv{i} }\; \underline{\ \ }  ::  \delta\;  e  \;  a   \\
\end{array}
  \]

  \caption{PEAK syntax and semantic functions.}
  \label{fig:peakfun}
\end{figure}

\begin{figure}

\begin{gather*}
\infer
   {  \langle \ottnt{p}  ,\;  e  ,\;  a  ,\;  k \rangle 
      \longrightarrow 
      \langle \ottnt{p'}  ,\;  e'  ,\;  \cdot  ,\;   \delta\;  e  \;  a   +\!\!\!+  k \rangle  }
   {  \ottnt{P}  [  \ottnt{p}  ]  = \kw{force} \,  \underline{\ \ } _{ \ottsym{0} }  &
     \gamma \, \ottsym{(}   \ottsym{0}  \CONS  \ottnt{p}   \ottsym{)} \, e = \ottsym{[}  \ottnt{p'}  \ottsym{,}  e'  \ottsym{]} }
\\
\infer
    {  \langle \ottnt{p}  ,\;  e  ,\;   \kwl{SEQ} \, \ottnt{p'}  ::  a   ,\;  k \rangle 
       \longrightarrow 
       \langle  \ottsym{1}  \CONS  \ottnt{p'}   ,\;  e  \ottsym{[}  \ottnt{p'}  \mathbin{\mapsto}  v  \ottsym{]}  ,\;  a  ,\;  k \rangle  }
    {  \ottnt{P}  [  \ottnt{p}  ]  = \kw{prd} \,  \underline{\ \ } _{ \ottsym{0} }  &
      \gamma \, \ottsym{(}   \ottsym{0}  \CONS  \ottnt{p}   \ottsym{)} \, e = v }
\\
\infer
    {  \langle \ottnt{p}  ,\;  e  ,\;  \cdot  ,\;   \ottsym{[}  \ottnt{p'}  \ottsym{,}  e'  \ottsym{,}  a'  \ottsym{]} :: k  \rangle 
       \longrightarrow 
       \langle  \ottsym{1}  \CONS  \ottnt{p'}   ,\;  e'  \ottsym{[}  \ottnt{p'}  \mathbin{\mapsto}  v  \ottsym{]}  ,\;  a'  ,\;  k \rangle  }
    {  \ottnt{P}  [  \ottnt{p}  ]  = \kw{prd} \,  \underline{\ \ } _{ \ottsym{0} }  &
      \gamma \, \ottsym{(}   \ottsym{0}  \CONS  \ottnt{p}   \ottsym{)} \, e = v }
\\
\infer
    {  \langle \ottnt{p}  ,\;  e  ,\;    \kwl{ARG}\; \ottnt{q}   ::  a   ,\;  k \rangle 
       \longrightarrow 
       \langle  \ottsym{0}  \CONS  \ottnt{p}   ,\;  e  \ottsym{[}  \ottnt{p}  \mathbin{\mapsto}  v  \ottsym{]}  ,\;  a  ,\;  k \rangle  }
    {  \ottnt{P}  [  \ottnt{p}  ]  =  \lambda \ottmv{x} .  \underline{\ \ } _{ \ottsym{0} }   &
      \gamma \, \ottsym{(}   \ottsym{0}  \CONS  \ottnt{q}   \ottsym{)} \, e = v }
\\
\infer
    {  \langle \ottnt{p}  ,\;  e  ,\;  \cdot  ,\;   v  \mathbin{\cdot} \, \underline{\ \ } :: k  \rangle 
       \longrightarrow 
       \langle  \ottsym{0}  \CONS  \ottnt{p}   ,\;  e  \ottsym{[}  \ottnt{p}  \mathbin{\mapsto}  v  \ottsym{]}  ,\;  \cdot  ,\;  k \rangle  }
    {  \ottnt{P}  [  \ottnt{p}  ]  =  \lambda \ottmv{x} .  \underline{\ \ } _{ \ottsym{0} }   }
\\
\infer
    {  \langle \ottnt{p}  ,\;  e  ,\;  a  ,\;  k \rangle 
       \longrightarrow  \rho }
    {  \ottnt{P}  [  \ottnt{p}  ]  =  \underline{\ \ } _{ \ottsym{0} }  \, \kw{to} \, \ottmv{x} \, \kw{in} \,  \underline{\ \ } _{ \ottsym{1} }  &
       \langle  \ottsym{0}  \CONS  \ottnt{p}   ,\;  e  ,\;   \kwl{SEQ} \, \ottnt{p}  ::  a   ,\;  k \rangle 
       \longrightarrow  \rho }
\\
\infer
    {  \langle \ottnt{p}  ,\;  e  ,\;  a  ,\;  k \rangle 
       \longrightarrow  \rho }
    {  \ottnt{P}  [  \ottnt{p}  ]  =   \underline{\ \ } _{ \ottsym{0} }  \!\cdot\!  \underline{\ \ } _{ \ottsym{1} }   &
       \langle  \ottsym{1}  \CONS  \ottnt{p}   ,\;  e  ,\;    \kwl{ARG}\; \ottnt{p}   ::  a   ,\;  k \rangle 
       \longrightarrow  \rho }
\\
\infer
    {  \langle \ottnt{p}  ,\;  e  ,\;  a  ,\;  k \rangle 
       \longrightarrow  \rho }
    {  \ottnt{P}  [  \ottnt{p}  ]  = \kw{letrec} \, \ottmv{x_{\ottmv{i}}}  \ottsym{=}   \underline{\ \ } _{ \ottmv{i} }  \, \kw{in} \,  \underline{\ \ } _{ \ottsym{0} }  &
       \langle  \ottsym{0}  \CONS  \ottnt{p}   ,\;  e  ,\;  a  ,\;  k \rangle 
       \longrightarrow  \rho }
\\
\infer
    {  \langle \ottnt{p}  ,\;  e  ,\;  a  ,\;  k \rangle 
       \longrightarrow   \langle  \ottsym{1}  \CONS  \ottnt{p}   ,\;  e  ,\;  a  ,\;  k \rangle  }
    {  \ottnt{P}  [  \ottnt{p}  ]  =  \kw{if0} \;   \underline{\ \ } _{ \ottsym{0} }  \;   \underline{\ \ } _{ \ottsym{1} }  \;   \underline{\ \ } _{ \ottsym{2} }   &
      \gamma \,  \ottsym{0}  \CONS  \ottnt{p}  \, e = \ottsym{0}}
\\
\infer
    {  \langle \ottnt{p}  ,\;  e  ,\;  a  ,\;  k \rangle 
       \longrightarrow   \langle  \ottsym{2}  \CONS  \ottnt{p}   ,\;  e  ,\;  a  ,\;  k \rangle  }
    {  \ottnt{P}  [  \ottnt{p}  ]  =  \kw{if0} \;   \underline{\ \ } _{ \ottsym{0} }  \;   \underline{\ \ } _{ \ottsym{1} }  \;   \underline{\ \ } _{ \ottsym{2} }   &
      \gamma \,  \ottsym{0}  \CONS  \ottnt{p}  \, e  \neq  \ottsym{0}}
\\
\infer
    {  \langle \ottnt{p}  ,\;  e  ,\;   \kwl{SEQ} \, \ottnt{p'}  ::  a   ,\;  k \rangle 
       \longrightarrow 
       \langle  \ottsym{1}  \CONS  \ottnt{p'}   ,\;  e  \ottsym{[}  \ottnt{p'}  \mathbin{\mapsto}  v_{{\mathrm{1}}}  \means{\oplus}  v_{{\mathrm{2}}}  \ottsym{]}  ,\;  a  ,\;  k \rangle  }
    {  \ottnt{P}  [  \ottnt{p}  ]  =  \underline{\ \ } _{ \ottsym{0} }   \oplus   \underline{\ \ } _{ \ottsym{1} }  &
       \gamma \, \ottsym{(}   \ottsym{0}  \CONS  \ottnt{p}   \ottsym{)} \, e = v_{{\mathrm{1}}} &
       \gamma \, \ottsym{(}   \ottsym{1}  \CONS  \ottnt{p}   \ottsym{)} \, e = v_{{\mathrm{2}}} }
\\
\infer
    {  \langle \ottnt{p}  ,\;  e  ,\;  \cdot  ,\;   \ottsym{[}  \ottnt{p'}  \ottsym{,}  e'  \ottsym{,}  a'  \ottsym{]} :: k  \rangle 
       \longrightarrow 
       \langle  \ottsym{1}  \CONS  \ottnt{p'}   ,\;  e'  \ottsym{[}  \ottnt{p'}  \mathbin{\mapsto}  v_{{\mathrm{1}}}  \means{\oplus}  v_{{\mathrm{2}}}  \ottsym{]}  ,\;  a'  ,\;  k \rangle  }
    {  \ottnt{P}  [  \ottnt{p}  ]  =  \underline{\ \ } _{ \ottsym{0} }   \oplus   \underline{\ \ } _{ \ottsym{1} }  &
       \gamma \, \ottsym{(}   \ottsym{0}  \CONS  \ottnt{p}   \ottsym{)} \, e = v_{{\mathrm{1}}} &
       \gamma \, \ottsym{(}   \ottsym{1}  \CONS  \ottnt{p}   \ottsym{)} \, e = v_{{\mathrm{2}}} }
\end{gather*}
  \caption{PEAK semantics for a program $P$.}
  \label{fig:peak}
\end{figure}

Figure~\ref{fig:peakfun} shows the syntax for the PEAK abstract machine,
as well as some helper functions described below. Unlike the CEK machine, the PEAK machine
works with paths in the AST rather than directly with
CBPV terms. It also uses an extra ``argument stack'' $a$ to
delay the creation of closures. A PEAK state $\rho$ is thus a quadruple $ \langle \ottnt{p}  ,\;  e  ,\;  a  ,\;  k \rangle $, where $e$ is the
environment, which maps paths to values, and $k$ is the stack, whose
frames mirror the CEK frames.

\paragraph{Path Environments.}
Earlier, we remarked that CEK machine closures
represent terms with delayed substitutions. Unloading carried out
this substitution explicitly to recover exactly the residual of
reduction using the SOS rules. It should therefore be possible to add
bindings to an environment in a closure without changing the execution
behavior or affecting the relationship with SOS states, as long as
they do not bind additional free variables of the term. The main
difficulty in sharing environments to attain an optimized evaluation
strategy is that the desired property of substitutions is sensitive to
the exact representation of environments and binders.

%% The two most popular representations of binders are names and
%% DeBruijn indices. Neither of these is particularly well suited to
%% sharing environments. In a DeBruijn representation, bindings are added to the front of the list
%% as execution proceeds under binders, changing the interpretation of
%% each index. Using a named representation of binders runs into
%% problems with capturing variable names; one could require that all
%% bound names be unique and distinct from the free variables of the
%% source term, but our intended target language already contains an
%% alternative solution.

We borrow an idea from SSA and use a representation of environments
that associates dynamic values with \textit{paths} into the source program.
This turns out to have a number of advantages: 1. it imposes no
conditions on the representation of binders in source terms, 2. we
automatically get a unique name for each occurrence of a binder, which
facilitates sharing environments, and 3. it results in a definition of
machine states nearly identical to that of our desired CFG machine. 

A PEAK machine \textit{path} is simply a sequence of natural numbers
interpreted relative to a CBPV term $P$.\footnote{Here we use the
  metavariable $P$ rather than $M$ to range over CBPV terms to
  emphasize the fact that this term is being treated by a PEAK
  machine.} The empty path $\cdot$ denotes the entire term $P$,
whereas a path of the form $ \ottmv{n}  \CONS  \ottnt{p} $ denotes the $n^{th}$ subterm
(ordered from left to right) of the term denoted by $\ottnt{p}$, indexed
from $0$. For example, for $\ottnt{P}= \kw{if0} \;  \ottnt{V} \;  \langle  \ottnt{M_{{\mathrm{1}}}}  \ottsym{,}  \ottnt{M_{{\mathrm{2}}}}  \rangle \;  \ottnt{M_{{\mathrm{3}}}} $ the path
$ \ottsym{2}  \CONS  \cdot $ denotes the subterm $\ottnt{M_{{\mathrm{3}}}}$. We write $ \ottnt{P}  [  \ottnt{p}  ] $ for
the subterm of $\ottnt{P}$ at path $\ottnt{p}$, and so for the term above we
also have $ \ottnt{P}  [   \ottsym{0}  \CONS  \cdot   ]  = \ottnt{V}$, $ \ottnt{P}  [   \ottsym{0}  \CONS   \ottsym{1}  \CONS  \cdot    ]  = \ottnt{M_{{\mathrm{1}}}}$,
and $ \ottnt{P}  [   \ottsym{1}  \CONS   \ottsym{1}  \CONS  \cdot    ]  = \ottnt{M_{{\mathrm{2}}}}$.  (Clearly $\ottnt{P}[-]$ is a
partial function.)

% In order to associate dynamic values with the location of a binder in
% the source term, the machine states keeps track of the path from the
% root of the term to the currently executing subterm. Since the control
% component of CEK states is always a subterm of the source program, we
% can redefine the transition relation to take the source program as a
% parameter to avoid maintaining redundant information.
%
%% TODO: go into detail about path and environment representation,
%% - either write down lookup fn or explain that lookup starts with
%%   tail of path ... 
%% envs are finite maps from paths to dynamic values
%% equivalence rel. =p, when maps agree for all prefixes of p, etc.
%

\paragraph{PEAK operational semantics.}

Figure~\ref{fig:peak} shows the operational semantics for the PEAK
abstract machine. The first component $\ottnt{p}$ of a PEAK state
$ \langle \ottnt{p}  ,\;  e  ,\;  a  ,\;  k \rangle $ is a path that tracks the location of the
currently executing subterm, relative to a CBPV term $\ottnt{P}$. Since
$\ottnt{P}$ is constant for the duration of the execution, the step
function of the PEAK machine is parameterized by it. %During execution,
%paths into subterms of $\ottnt{P}$ are dynamically manipulated as needed.
The column labeled ``when $ \ottnt{P}  [  \ottnt{p}  ] $ = " shows the part of the
subterm that needs to be consulted to determine which evaluation rule
to apply.  The indices on the \underline{\ \ } ``wildcard'' pattern mark the
numbers used to construct the paths.  For instance, if $ \ottnt{P}  [  \ottnt{p}  ]  = \kw{if0} \;   \underline{\ \ } _{ \ottsym{0} }  \;   \underline{\ \ } _{ \ottsym{1} }  \;   \underline{\ \ } _{ \ottsym{2} }  $, then the subterm of the guard is $ \ottsym{0}  \CONS  \ottnt{p} $.

Sharing environments requires that the machine recognize when it is
safe to delay creating a closure. PEAK machine configurations contain
an additional ``argument stack'' component that logically contains a
prefix of the continuation stack for which the current environment is
a valid substitution. The syntax of argument frames is presented in
Fig. \ref{fig:peak}. In the argument stack, $ \kwl{ARG} $, $ \kwl{SEQ} $ and
$ \kwl{PRJ} $ correspond to application, sequencing, and projection
continuation frames, respectively.

\begin{figure}[t]
\small
  \[
  \begin{array}{lcl@{\quad }l}
    \kw{unload}_e & : & \mathrm{Path} \to \mathrm{Env} \to \mathrm{Env_{CEK}} \\
     \kw{unload}_{e} ( \cdot ,\; e )  & = & \cdot \\
     \kw{unload}_{e} (  \ottmv{n}  \CONS  \ottnt{p}  ,\; e )  
      & = & \ottsym{[}  \ottmv{x}  \mathbin{\mapsto}   \kw{unload}_{v} ( v )   \ottsym{]}  \ottsym{(}   \kw{unload}_{e} ( \ottnt{p} ,\; e )   \ottsym{)}
        & \mbox{when $ \ottnt{P}  [  \ottnt{p}  ]  =  \lambda \ottmv{x} . \underline{\ \ } $, $n = 0$, and $e(\ottnt{p}) = v$ } \\ 
      & = & \ottsym{[}  \ottmv{x}  \mathbin{\mapsto}   \kw{unload}_{v} ( v )   \ottsym{]}  \ottsym{(}   \kw{unload}_{e} ( \ottnt{p} ,\; e )   \ottsym{)}
        & \mbox{when $ \ottnt{P}  [  \ottnt{p}  ]  = \underline{\ \ } \, \kw{to} \, \ottmv{x} \, \kw{in} \, \underline{\ \ }$, $n = 1$, and $e(\ottnt{p}) = v$ } \\ 
      & = & \ottsym{[}  \ottmv{x_{\ottmv{i}}}  \mathbin{\mapsto}  \ottnt{M_{\ottmv{i}}}  \ottsym{]}  \ottsym{(}   \kw{unload}_{e} ( \ottnt{p} ,\; e )   \ottsym{)}
        & \mbox{when $ \ottnt{P}  [  \ottnt{p}  ]  = \kw{letrec} \, \ottmv{x_{\ottmv{i}}}  \ottsym{=}  \ottnt{M_{\ottmv{i}}} \, \kw{in} \, \underline{\ \ }$, $n = 0$, and $e(\ottnt{p}) = v$ } \\ 
% SAZ: Note sure about this
      % & = & \texttt{\textcolor{red}{<<no parses (char 9):  [x := [r***ec x. lookup P 0:p, punloade p pe] ] (punloade p pe)  >>}}
      %   & \mbox{when $ \ottnt{P}  [  \ottnt{p}  ]  = \texttt{\textcolor{red}{<<no parses (char 1): r***ec x. \_ >>}}$, and $n = 0$ } \\
      & = &  \kw{unload}_{e} ( \ottnt{p} ,\; e )  & \mbox{otherwise} \\
  \end{array}
  \]
  \[
  \begin{array}{lcr@{\ }l}
    \kw{unload}_v & : & \mathrm{Val} \to \mathrm{Val_{CEK}} \\
     \kw{unload}_{v} ( \ottmv{x} )  & = & \ottmv{x} \\
     \kw{unload}_{v} ( \ottmv{n} )  & = & \ottmv{n} \\ 
     \kw{unload}_{v} ( \ottsym{[}  \ottnt{p}  \ottsym{,}  e  \ottsym{]} )  & = & \ottsym{[}   \ottnt{P}  [  \ottnt{p}  ]   \ottsym{,}   \kw{unload}_{e} ( \ottnt{p} ,\; e )   \ottsym{]}
  \end{array}
  \]
  \[
  \begin{array}{lcr@{\ }l}
    \kw{unload}_k & : & \mathrm{Env} \to \mathrm{Args} \to \mathrm{Kont} &
    \to \mathrm{Kont_{CEK}} \\
     \kw{unload}_k ( e ,\;  \cdot ,\;  \cdot )  & = & \cdot \\
     \kw{unload}_k ( e ,\;    \kwl{ARG}\; \ottnt{p}   ::  a  ,\;  k )  & = & 
       \kw{unload}_{v} ( \gamma \,  \ottsym{0}  \CONS  \ottnt{p}  \, e )   \mathbin{\cdot} \, \underline{\ \ } &::  \kw{unload}_k ( e ,\;  a ,\;  k )  \\ %TODO: decide to switch to PEAK_ALT?
     \kw{unload}_k ( e ,\;   \kwl{SEQ} \, \ottnt{p}  ::  a  ,\;  k )  & = & 
      \ottsym{[} \, \underline{\ \ } \, \kw{to} \, \ottmv{x} \, \kw{in} \,  \ottnt{P}  [   \ottsym{1}  \CONS  \ottnt{p}   ]   \ottsym{,}   \kw{unload}_{e} ( \ottnt{p} ,\; e )   \ottsym{]} &::  \kw{unload}_k ( e ,\;  a ,\;  k )  \\
     \kw{unload}_k ( e ,\;  \cdot ,\;   \ottsym{(}  v  \mathbin{\cdot} \, \underline{\ \ }  \ottsym{)} :: k  )        & = & 
       \kw{unload}_{v} ( v )   \mathbin{\cdot} \, \underline{\ \ } &::  \kw{unload}_k ( e ,\;  \cdot ,\;  k )  \\
     \kw{unload}_k ( e ,\;  \cdot ,\;   \ottsym{[}  \ottnt{p}  \ottsym{,}  e'  \ottsym{,}  a'  \ottsym{]} :: k  )  & = & 
      \ottsym{[} \, \underline{\ \ } \, \kw{to} \, \ottmv{x} \, \kw{in} \,  \ottnt{P}  [   \ottsym{1}  \CONS  \ottnt{p}   ]   \ottsym{,}   \kw{unload}_{e} ( \ottnt{p} ,\; e' )   \ottsym{]} &::  \kw{unload}_k ( e' ,\;  a' ,\;  k )  \\
  \end{array}
  \]
  \[
  \begin{array}{lcr@{\ }l}
    \kw{unload} & : & \mathrm{State_{PEAK}} \to
      \mathrm{State_{CEK}} \\
    \kw{unload}( \langle \ottnt{p}  ,\;  e  ,\;  a  ,\;  k \rangle ) & = &  \langle  \ottnt{P}  [  \ottnt{p}  ]   ,\;   \kw{unload}_{e} ( \ottnt{p} ,\; e )   ,\;   \kw{unload}_k ( e ,\;  a ,\;  k )  \rangle 
  \end{array}
  \]
  \caption{PEAK unloading for a program $P$.}
  \label{fig:pekunlfuns}
\end{figure}

\paragraph{The Argument Stack}
The PEAK machine rules for sequencing and application push frames onto the argument stack, batching them up.
The produce, lambda, and tuple rules each have two cases. When the
argument stack is empty, the machine executes roughly as the CEK
machine. The rule for $ \kw{force} $ now evaluates the argument stack up
until the nearest $ \kwl{SEQ} $ and pushes the resulting frames on the
continuation stack---this operation, written as $ \delta $, is shown in
Figure~\ref{fig:peakfun}. The remaining argument frames are also saved
in the sequence frame, and are restored when the sequence frame is
popped by $ \kw{prd} $. The rule for $ \kw{letrec} $ is similar to sequencing or application, but does not push an argument frame, since it has no associated continuation.

The PEAK version of the $ \gamma $ function
is adapted from the CEK's $ \gamma $ function to work on paths rather
than directly on values. Its definition is shown in
Figure~\ref{fig:peakfun}. Note that a closure is only constructed when $ \gamma $ is called on a $ \kw{thunk} $; this means that construction of a closure for a $ \kw{letrec} $ is delayed until the variable bound by a $ \kw{letrec} $ appears under a
$ \kw{prd} $ or is applied. If a variable bound by a $ \kw{letrec} $ is
forced, the closure returned by $ \gamma $ just passes through the
current environment.
% \steve{This argument stack is a bit like Leroy's ZINC machine, which
%   also uses an argument stack, right? We should probably cite that
%   here too.}

\paragraph{Unloading PEAK Machine States to CEK.}
Recovering CEK states from PEAK states involves converting dynamic
values, argument frames, and continuation frames. The $ \kw{PEAK.unload} $
definition is given in Figure~\ref{fig:pekunlfuns}. Recovering a term from a
path is a simple lookup in the source program, while unloading the
PEAK environment proceeds by recursion on the path component of the
closure. Every time a path corresponding to a binder is encountered,
the associated dynamic value is unloaded and added to the CEK
environment. The $ \kw{letrec} $ binder is a special case: instead of binding
a dynamic value, it binds a list of terms, which are stored directly in the CEK environment.
%% 
%% TODO: figure, unload dval

To produce a CEK continuation stack from a PEAK machine state, the
continuation stack is interleaved with argument frames. Starting with
the current argument stack, each argument frame is converted to a CEK
frame using the current environment. Then, the PEAK continuation stack
is traversed, converting application and projection frames as
necessary. When a sequence frame is encountered, the process is
iterated using the saved environment and argument stack and the
remainder of the continuation stack.
%% 
%% TODO: figure for unload_oframe, unwind

\paragraph{Well-Formed States.}
The representation of environments used by the PEAK machine requires
some extra well-formedness constraints on states in order to guarantee
that free variables will not be captured. 
\begin{definition}
  A PEAK state $ \langle \ottnt{p}  ,\;  e  ,\;  a  ,\;  k \rangle $ is \emph{well-formed} when both of the following hold:
  \begin{enumerate}
  \item In each closure as well as the path and environment of the
    machine state, every suffix of the path that corresponds to a
    binder has an associated dynamic value in the environment.
  \item The path in the first argument frame is a suffix of the current path $\ottnt{p}$;
	the path of each other argument frame is a suffix of the path of the preceding frame. 
%Every path occurring in an argument frame is a suffix of the
    %path of the preceding $ \kwl{SEQ} $ frame, or the current path of the
    %machine state or containing continuation frame.

  \end{enumerate}
\end{definition}
The first condition is a scoping property that rules out states in which 
bound variables of the source term do not have
associated dynamic values in the substitution. The second condition is
necessary to maintain this invariant when argument frames are pushed
onto and restored from the continuation stack.

As described in Section~\ref{sec:proof}, we must prove that CBPV terms load to well-formed states,
and that well-formedness is preserved by machine steps. Loading is simply the constant function that produces the initial state $ \langle \cdot  ,\;  \cdot  ,\;  \cdot  ,\;  \cdot \rangle $, so the first proof is simple; the second follows from the definition of the step function. 

\paragraph{Correctness with Respect to the CEK Machine}
The PEAK step and unloading functions, as well as the state
well-formedness predicate, are now defined with respect to an initial
CBPV term. The statement of correctness is adjusted from that of the CEK machine accordingly.

\begin{lemma} (PEAK Correctness) For all terms $\ottnt{P}$ and PEAK states $\rho$ that
  are well-formed with respect to $\ottnt{P}$, 
\[
\kw{CEK.step} \, \ottsym{(}  \kw{PEAK.unload} \, \ottnt{P} \, \rho  \ottsym{)} = 
\kw{PEAK.unload} \, \ottnt{P} \, \ottsym{(}  \kw{PEAK.step} \, \ottnt{P} \, \rho  \ottsym{)}
\]
\end{lemma}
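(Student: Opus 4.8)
The plan is to prove the lemma by induction on the structure of the subterm $ \ottnt{P}  [  \ottnt{p}  ] $ sitting at the current path, which is the same as inducting on the derivation of $\kw{PEAK.step}\,\ottnt{P}\,\rho$: every recursive PEAK rule descends to a path $ \ottsym{0}  \CONS  \ottnt{p} $ or $ \ottsym{1}  \CONS  \ottnt{p} $ naming a strictly smaller subterm, so the recursion is well-founded. I would split the cases into two groups mirroring the SOS. The \emph{search} constructs---sequencing $\underline{\ \ }\,\kw{to}\,x\,\kw{in}\,\underline{\ \ }$, application $\underline{\ \ } \cdot \underline{\ \ }$, and $\kw{letrec}$---have recursive PEAK rules; the \emph{redex} constructs---$\kw{force}$, $\kw{prd}$, $\lambda$, $\kw{if0}$, and arithmetic---fire directly. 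Because $\kw{CEK.step}$ likewise has recursive rules for sequencing, application, and recursive bindings, the two recursions proceed in lockstep, and the whole argument reduces to matching the term, environment, and continuation components of the two resulting CEK states in each case.

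For each search construct I would apply the induction hypothesis to the deeper state and then check that the single CEK step of the original unloaded state agrees with the single CEK step of the deeper unloaded state. For sequencing, the PEAK rule rewrites $\langle p, e, a, k\rangle$ to $\langle  \ottsym{0}  \CONS  \ottnt{p} , e,  \kwl{SEQ} \, \ottnt{p}  :: a, k\rangle$; unfolding $\kw{unload}_e$ and $\kw{unload}_k$ shows that the unload of the latter is exactly what the CEK sequencing rule produces from the unload of the former (it pushes a $[\,\underline{\ \ }\,\kw{to}\,x\,\kw{in}\,N, e\,]$ frame), and symmetrically for application (an $ \kwl{ARG} $ frame becoming a CEK argument frame) and $\kw{letrec}$. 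The facts needed are that $\kw{unload}_e$ treats the descents $ \ottsym{0}  \CONS  \ottnt{p} $ (sequencing) and $ \ottsym{1}  \CONS  \ottnt{p} $ (application) as non-binding and simply passes through, whereas the $\kw{letrec}$ case of $\kw{unload}_e$ reproduces precisely the bundle of recursive definitions that the CEK $\kw{letrec}$ rule installs in the environment. Applying the induction hypothesis here requires the deeper state to be well-formed, so I would also use the (separately established) fact that well-formedness is preserved by these recursive descents.

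The redex cases rest on a few helper lemmas that I would establish first. (i) A $ \gamma $-correspondence lemma, $\kw{unload}_{v}(\gamma\,p\,e) = \gamma_{\mathrm{CEK}}\,( \ottnt{P}  [  \ottnt{p}  ] )\,(\kw{unload}_e(p,e))$, proved by cases on whether $ \ottnt{P}  [  \ottnt{p}  ] $ is a variable, numeral, or $\kw{thunk}$. (ii) A commutation lemma for the flush operation, $\kw{unload}_k(e,a,k) = \kw{unload}_k(e',\cdot,\, \delta\,e\,a +\!\!\!+ k)$, proved by induction on $a$: once the argument stack is empty, $\kw{unload}_k$ no longer consults its environment argument, so the residual $e'$ is irrelevant, while $ \delta $ stops at the first $ \kwl{SEQ} $ and repackages it into exactly the $[p, e, a]$ continuation frame that $\kw{unload}_k$ turns into the matching CEK sequence frame. (iii) A non-interference lemma: extending the environment at the binding path being consumed, $e[p \mapsto v]$, leaves $\kw{unload}_e$ of the strict suffixes and the unloads of the remaining stacks unchanged, since the update only affects a lookup at $p$ itself, which is never revisited. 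With these in hand, each redex case becomes a direct computation---for instance, $\kw{force}$ uses (i) to identify the looked-up closure and (ii) to show $ \delta\;  e  \;  a  +\!\!\!+ k$ unloads to the same CEK continuation as $a$ over $k$; $\kw{prd}$ combines (i), (ii), and (iii); the arithmetic and $\kw{if0}$ cases are variations.

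I expect the main obstacle to be lemma (ii) together with the non-interference reasoning (iii), since these are exactly the points where the PEAK optimization---delaying closure allocation and sharing one environment across a batch of pushed argument frames---diverges from the CEK machine. Both depend essentially on the second well-formedness condition, that each argument-frame path is a suffix of the preceding one and of $\ottnt{p}$: this is what guarantees that the single shared environment $e$ is a valid substitution for every batched frame, so that $ \delta $ may look them all up uniformly, and that the freshly bound path does not collide with any binding relied on later in the stacks. Carrying this invariant through---and in particular verifying that each machine step re-establishes it, which is what lets the inductive appeals discharge their well-formedness hypotheses---is the most delicate bookkeeping in the proof.
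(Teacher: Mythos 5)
Your proposal is correct and follows essentially the same route as the paper: the paper's proof (carried out in Coq, following the lockstep methodology of Section~\ref{sec:proof}) establishes exactly this unload-based simulation, with the recursive search rules of the PEAK and CEK machines matched case by case and the redex cases discharged by the correspondence between $\gamma$, $\delta$, and the unloading functions. Your auxiliary lemmas---the $\gamma$-correspondence, the commutation $\kw{unload}_k(e,a,k) = \kw{unload}_k(e',\cdot,\delta\,e\,a \mathbin{+\!\!\!+} k)$, and the non-interference of updates $e[p \mapsto v]$ justified by the suffix condition on argument-frame paths---are precisely the ingredients the formal development relies on, and you correctly identify the second well-formedness condition as the crux that makes the shared-environment optimization sound.
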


\section{PEK Machine}
\label{sec:pek}

\begin{figure}
%% \footnotesize
%  
%% TODO: integrate into section; align some seps; all sorts of syntax changes
\[
\begin{array}[t]{r@{\ }l@{\ \ }c@{\ }lll}
  \mathrm{State} \ \ni\ & \rho & =\  & \mathrm{Path} \times \mathrm{Env}
    \times \mathrm{Kont} \\
  \mathrm{Path} \ \ni\ & \ottnt{p}, \ottnt{q} & \bnf\ & \cdot \sep  \ottmv{n}  \CONS  \ottnt{p}  \\
  \mathrm{Env} \ \ni\ & e & =\ & \mathrm{Path}
    \longrightarrow_{\mathrm{fin}} \mathrm{Val} \\
  \mathrm{Args} \ \ni\ & a & \bnf\ & \cdot \sep   \kwl{ARG}\; \ottnt{p}   ::  a 
    \sep  \kwl{SEQ} \, \ottnt{p}  ::  a  \\
  \mathrm{Kont} \ \ni\ & k & \bnf\ & \cdot \sep  v  \mathbin{\cdot} \, \underline{\ \ } :: k 
    \sep  \ottsym{[}  \ottnt{p}  \ottsym{,}  \ottnt{q}  \ottsym{,}  e  \ottsym{]} :: k  \\
  %% TODO: errors?
  \mathrm{Val} \ \ni\ & v & \bnf\ & \ottmv{x} \sep \ottmv{n} \sep \ottsym{[}  \ottnt{p}  \ottsym{,}  e  \ottsym{]}
\end{array}
\]

\begin{align*}
   \epsilon & : & \mathrm{Path} \rightarrow \mathrm{Path}\\
   \mathrm{aframes} & : & \mathrm{Path} \rightarrow \mathrm{Args}
\end{align*}

\[
\begin{array}{rcll}
  \gamma & : & \mathrm{Path} \rightarrow \mathrm{Env} \rightarrow \mathrm{DVal} \\
  \gamma \, \ottnt{p} \, e & = & \mathrm{lookup}(\ottnt{P}, \ottnt{p}, \ottnt{e}, \ottmv{x})
                        & \mbox{when $ \ottnt{P}  [  \ottnt{p}  ]  = \ottmv{x}$} \\
  \gamma \, \ottnt{p} \, e & = & \ottsym{[}   \eta\;   \ottsym{0}  \CONS  \ottnt{p}    \ottsym{,}  e  \ottsym{]} 
                      & \mbox{when $ \ottnt{P}  [  \ottnt{p}  ]  = \kw{thunk} \, \underline{\ \ }$} \\
  \gamma \, \ottnt{p} \, e & = & \ottmv{n} 
                      & \mbox{when $ \ottnt{P}  [  \ottnt{p}  ]  = \ottmv{n}$} \\

\end{array}
\]

\[
  \begin{array}{rcll}
    \delta & : & \mathrm{Env} \rightarrow \mathrm{Args} \rightarrow \mathrm{Kont} \\
     \delta\;  e  \;  \cdot  & = & . \\
     \delta\;  e  \;  \ottsym{(}    \kwl{ARG}\; \ottnt{p}   ::  a   \ottsym{)}  & = &  \ottsym{(}  \gamma \, \ottsym{(}   \ottsym{0}  \CONS  \ottnt{p}   \ottsym{)} \, e  \mathbin{\cdot} \, \underline{\ \ }  \ottsym{)} ::  \delta\;  e  \;  a   \\
     \delta\;  e  \;  \ottsym{(}   \kwl{SEQ} \, \ottnt{p}  ::  a   \ottsym{)}  & = &  \ottsym{[}  \ottnt{p}  \ottsym{,}   \eta\;   \ottsym{1}  \CONS  \ottnt{p}    \ottsym{,}  e  \ottsym{]} :: \cdot  \\
\end{array}
\]

\caption{PEK machine states and value-semantic functions}
\label{fig:pekfun}
\end{figure}

\begin{figure}
  \begin{minipage}{0.5\linewidth}
\begin{lstlisting}[basicstyle={\sffamily},mathescape=true]
$\kw{aframes} \, \ottsym{(}   \ottmv{n}  \CONS  \ottnt{p}   \ottsym{)}$ =
  match $\ottmv{n}$, $ \ottnt{P}  [   \ottmv{n}  \CONS  \ottnt{p}   ] $ with
   | 1, $  \underline{\ \ } _{ \ottsym{0} }  \!\cdot\!  \underline{\ \ } _{ \ottsym{1} }  $ $\TO$ $  \kwl{ARG}\; \ottnt{p}   ::  \kw{aframes} \, \ottnt{p} $
   | 0, $ \lambda \ottmv{x} .  \underline{\ \ } _{ \ottsym{0} }  $ $\TO$ match $\kw{aframes} \, \ottnt{p}$ with 
                | $  \kwl{ARG}\; \ottnt{q}   ::  a $ $\TO$ $a$
                | $a$ $\TO$ $a$
                end
   | 0, $ \underline{\ \ } _{ \ottsym{0} }  \, \kw{to} \, \ottmv{x} \, \kw{in} \,  \underline{\ \ } _{ \ottsym{1} } $ $\TO$ $ \kwl{SEQ} \, \ottnt{p}  ::  \kw{aframes} \, \ottnt{p} $
   | 0, $\kw{letrec} \, \ottmv{x_{\ottmv{i}}}  \ottsym{=}   \underline{\ \ } _{ \ottmv{i} }  \, \kw{in} \,  \underline{\ \ } _{ \ottsym{0} } $ | 1, $ \underline{\ \ } _{ \ottsym{0} }  \, \kw{to} \, \ottmv{x} \, \kw{in} \,  \underline{\ \ } _{ \ottsym{1} } $ | 1, $ \kw{if0} \;   \underline{\ \ } _{ \ottsym{0} }  \;   \underline{\ \ } _{ \ottsym{1} }  \;   \underline{\ \ } _{ \ottsym{2} }  $ 
   | 2, $ \kw{if0} \;   \underline{\ \ } _{ \ottsym{0} }  \;   \underline{\ \ } _{ \ottsym{1} }  \;   \underline{\ \ } _{ \ottsym{2} }  $ $\TO$ $\kw{aframes} \, \ottnt{p}$
   | _ $\TO$ $\cdot$
  end
\end{lstlisting}
  \end{minipage}
\caption{Calculating PEK argument frames}
\label{fig:pekaframes}
\end{figure}

\begin{figure}[ht]
  %% \scriptsize

\begin{gather*}
\infer
    {  \langle \ottnt{p}  ,\;  e  ,\;  k \rangle 
       \longrightarrow 
       \langle \ottnt{p'}  ,\;  e'  ,\;   \delta\;  e \;  a   +\!\!\!+  k \rangle  }
    {  \ottnt{P}  [  \ottnt{p}  ]       = \kw{force} \,  \underline{\ \ } _{ \ottsym{0} }  &
       \mathrm{aframes}( \ottnt{p} )        = a &
      \gamma \, \ottsym{(}   \ottsym{0}  \CONS  \ottnt{p}   \ottsym{)} \, e = \ottsym{[}  \ottnt{p'}  \ottsym{,}  e'  \ottsym{]} }
\\
\infer
    {  \langle \ottnt{p}  ,\;  e  ,\;  k \rangle 
       \longrightarrow 
       \langle  \eta\;   \ottsym{1}  \CONS  \ottnt{p'}    ,\;  e  \ottsym{[}  \ottnt{p'}  \mathbin{\mapsto}  v  \ottsym{]}  ,\;  k \rangle  }
    {  \ottnt{P}  [  \ottnt{p}  ]       =  \kw{prd} \,  \underline{\ \ } _{ \ottsym{0} }  &
       \mathrm{aframes}( \ottnt{p} )        =   \kwl{SEQ} \, \ottnt{p'}  ::  a  &
      \gamma \, \ottsym{(}   \ottsym{0}  \CONS  \ottnt{p}   \ottsym{)} \, e =  v }
\\
\infer
    {  \langle \ottnt{p}  ,\;  e  ,\;   \ottsym{[}  \ottnt{q}  \ottsym{,}  \ottnt{p'}  \ottsym{,}  e'  \ottsym{]} :: k  \rangle 
       \longrightarrow 
       \langle \ottnt{p'}  ,\;  e'  \ottsym{[}  \ottnt{q}  \mathbin{\mapsto}  v  \ottsym{]}  ,\;  k \rangle  }
    {  \ottnt{P}  [  \ottnt{p}  ]       = \kw{prd} \,  \underline{\ \ } _{ \ottsym{0} }  &
       \mathrm{aframes}( \ottnt{p} )        = \cdot &
      \gamma \, \ottsym{(}   \ottsym{0}  \CONS  \ottnt{p}   \ottsym{)} \, e = v }
\\
\infer
    {  \langle \ottnt{p}  ,\;  e  ,\;  k \rangle 
       \longrightarrow 
       \langle  \eta\;   \ottsym{0}  \CONS  \ottnt{p}    ,\;  e  \ottsym{[}  \ottnt{p}  \mathbin{\mapsto}  v  \ottsym{]}  ,\;  k \rangle  }
    {  \ottnt{P}  [  \ottnt{p}  ]       =  \lambda \ottmv{x} .  \underline{\ \ } _{ \ottsym{0} }   & 
       \mathrm{aframes}( \ottnt{p} )        =   \kwl{ARG}\; \ottnt{q}   ::  a  &
      \gamma \, \ottsym{(}   \ottsym{0}  \CONS  \ottnt{q}   \ottsym{)} \, e = v }
\\
\infer
    {  \langle \ottnt{p}  ,\;  e  ,\;   v  \mathbin{\cdot} \, \underline{\ \ } :: k  \rangle 
       \longrightarrow 
       \langle  \eta\;   \ottsym{0}  \CONS  \ottnt{p}    ,\;  e  \ottsym{[}  \ottnt{p}  \mathbin{\mapsto}  v  \ottsym{]}  ,\;  k \rangle  }
    {  \ottnt{P}  [  \ottnt{p}  ]       =  \lambda \ottmv{x} .  \underline{\ \ } _{ \ottsym{0} }   &
       \mathrm{aframes}( \ottnt{p} )        = \cdot }
\\
\infer
    {  \langle \ottnt{p}  ,\;  e  ,\;  k \rangle 
       \longrightarrow 
       \langle  \eta\;   \ottsym{1}  \CONS  \ottnt{p'}    ,\;  e  \ottsym{[}  \ottnt{p'}  \mathbin{\mapsto}  v_{{\mathrm{1}}}  \means{\oplus}  v_{{\mathrm{2}}}  \ottsym{]}  ,\;  k \rangle  }
    {  \ottnt{P}  [  \ottnt{p}  ]       =  \underline{\ \ } _{ \ottsym{0} }   \oplus   \underline{\ \ } _{ \ottsym{1} }  &
       \mathrm{aframes}( \ottnt{p} )        =  \kwl{SEQ} \, \ottnt{p'}  ::  a  &
      \gamma \, \ottsym{(}   \ottsym{0}  \CONS  \ottnt{p}   \ottsym{)} \, e = v_{{\mathrm{1}}} & 
      \gamma \, \ottsym{(}   \ottsym{1}  \CONS  \ottnt{p}   \ottsym{)} \, e = v_{{\mathrm{2}}} }
\\
\infer
    {  \langle \ottnt{p}  ,\;  e  ,\;   \ottsym{[}  \ottnt{q}  \ottsym{,}  \ottnt{p'}  \ottsym{,}  e'  \ottsym{]} :: k  \rangle 
       \longrightarrow 
       \langle \ottnt{p'}  ,\;  e'  \ottsym{[}  \ottnt{q}  \mathbin{\mapsto}  v_{{\mathrm{1}}}  \means{\oplus}  v_{{\mathrm{2}}}  \ottsym{]}  ,\;  k \rangle  }
    {  \ottnt{P}  [  \ottnt{p}  ]       =  \underline{\ \ } _{ \ottsym{0} }   \oplus   \underline{\ \ } _{ \ottsym{1} }  &
       \mathrm{aframes}( \ottnt{p} )        = \cdot &
      \gamma \, \ottsym{(}   \ottsym{0}  \CONS  \ottnt{p}   \ottsym{)} \, e = v_{{\mathrm{1}}} & 
      \gamma \, \ottsym{(}   \ottsym{1}  \CONS  \ottnt{p}   \ottsym{)} \, e = v_{{\mathrm{2}}} }
\end{gather*}

  \caption{PEK semantics for a program $P$ }
  \label{fig:pek}
\end{figure}

Examining the transitions of the PEAK machine, it is clear that the search
rules for sequencing and application now do very little work: they merely
save a path on the argument stack. Similarly, since bindings made by
$ \kw{letrec} $ can be determined from the path and initial program, the
associated transition rule just advances the program counter. Since the
argument stack is pushed onto the continuation stack every time execution
reaches a thunk, it turns out to be possible to compute the state of the
argument stack at any program point entirely statically.

To achieve our final execution strategy, we push the manipulation of
the argument stack into an auxiliary function, $ \kw{aframes} $, shown in
Fig. \ref{fig:pekaframes}. In the resulting transition system the
search rules become trivial, merely advancing the program counter to
the next redex before it is reduced. We replace these rules with a
simple function, $\eta$, that is used to calculate the next path in
each transition rule. The final transition system is shown in Fig.
\ref{fig:pek}.

\paragraph{Unloading, well-formedness of states, and correctness.}
Unloading PEK machine states to PEAK states is straightforward: the
representation of dynamic values and environments is identical modulo the $\eta$ operator, and
the machines traverse terms in lockstep. Application
frames in the continuation stack are also unchanged. For sequencing
frames, we only have to throw away the extra path to the next
instruction, and recover the stored argument stack. The argument stack
in a sequence frame $\ottsym{[}  \ottnt{p}  \ottsym{,}  e  \ottsym{,}  a  \ottsym{]}$ of the PEAK machine is
always the one grabbed at path $\ottnt{p}$, so we can simply reuse the
$ \kw{aframes} $ function used in the semantics. Similarly, we can
recover the argument stack of the top-level PEAK machine state. The
well-formedness condition for PEK machine states is also simplified.
We require only that paths point to values or computations as
necesssary, and environments in the machine state, closures, and the
continuations stack contain values for all binders in scope of the
associated path. Loading, again, always produces the initial state $ \langle \cdot  ,\;  \cdot  ,\;  \cdot \rangle $.

\begin{lemma} (PEK Correctness) For all terms $\ottnt{P}$ and PEK states $\rho$ that
  are well-formed with respect to $\ottnt{P}$, 
\[
\kw{PEAK.step} \, \ottnt{P} \, \ottsym{(}  \kw{PEK.unload} \, \ottnt{P} \, s  \ottsym{)} = 
\kw{PEK.unload} \, \ottnt{P} \, \ottsym{(}  \kw{PEK.step} \, \ottnt{P} \, s  \ottsym{)}
\]
\end{lemma}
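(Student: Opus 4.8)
The plan is to prove this commuting square exactly as the PEAK and CEK correctness lemmas were proved: by case analysis on the PEK transition that fires at $s$, which is determined by the shape of $ \ottnt{P}  [  \ottnt{p}  ] $ at the current path, by the value of $\kw{aframes} \, \ottnt{p}$, and by the topmost continuation frame. For each rule I would compute the two composites---$\kw{PEK.unload} \, \ottnt{P}$ applied to $\kw{PEK.step} \, \ottnt{P} \, s$, and $\kw{PEAK.step} \, \ottnt{P}$ applied to $\kw{PEK.unload} \, \ottnt{P} \, s$---and check that the resulting PEAK states are identical componentwise. Because the two machines agree on environments, dynamic values, and application frames (the representations differ only through the path-normalizing operator $\eta$), every component except the argument stack matches immediately; essentially all of the work lives in the argument-stack and path bookkeeping.

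The key auxiliary result I would isolate is that the static PEK functions $\eta$ and $\kw{aframes}$ reconstruct precisely the effect of the PEAK machine's recursive search rules for sequencing, application, and $ \kw{letrec} $. Concretely, I would prove a \emph{descent lemma}: starting from any path $\ottnt{q}$, the big-step PEAK transition first advances $\ottnt{q}$ to the next redex---the path computed by $\eta$---while extending the argument stack exactly by the frames prescribed by $\kw{aframes}$ along the traversed administrative segment, and only then fires the unique reduction rule enabled there. This is the fact that justifies folding PEAK's dynamic descent into PEK's one-shot computation, and it is proved by induction following the recursion of $\kw{PEAK.step}$ through the administrative constructs. Together with the PEK well-formedness invariant, it guarantees that the PEAK state obtained by unloading $s$ carries argument stack $\kw{aframes} \, \ottnt{p}$, and that the argument stacks implicit in sequence frames and closures are the ones recoverable by rerunning $\kw{aframes}$ at the stored path---which is exactly what the unloading definition assumes.

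With the descent lemma available the cases become local. For $ \kw{force} $ both sides enter the forced closure and prepend $ \delta\;  e \;  a $ to the continuation, so it suffices to check that the argument stack read off the unloaded state equals $\kw{aframes} \, \ottnt{p}$ and that $\delta$ emits the same frames. For $ \kw{prd} $ and for the arithmetic operator $ \oplus $ there are two subcases mirroring the PEAK rules---one in which $\kw{aframes} \, \ottnt{p}$ exposes a leading $ \kwl{SEQ} $ frame, and one in which it is empty so that the topmost sequence frame of the continuation is consumed---and in each the stored binding path, the restored argument stack, and the continuation target must be shown to agree. The $ \lambda $ cases are symmetric, consuming either an $ \kwl{ARG} $ frame supplied by $\kw{aframes}$ or a value frame popped from the stack.

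The main obstacle is precisely this argument-stack synchronization. Unlike PEAK, the PEK machine never stores the argument stack in its state; it recomputes $\kw{aframes} \, \ottnt{p}$ from scratch at every transition, and it keeps its path pointing at a redex rather than at an administrative construct. Showing that this recomputation stays in lockstep with PEAK's incremental pushing and popping of $ \kwl{ARG} $ and $ \kwl{SEQ} $ frames---and in particular that it survives the save-and-restore of argument stacks through sequence frames performed by $\delta$, where the $\eta$-normalized continuation path of a PEK sequence frame must be reconciled with the raw sequencing path that PEAK stores---is the delicate heart of the argument, and is where the well-formedness hypotheses and the descent lemma are indispensable. Once that correspondence is pinned down, the remaining equalities follow directly from the definitions of the unloading functions.
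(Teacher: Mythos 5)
Your proposal takes essentially the same approach as the paper: a lockstep, rule-by-rule case analysis whose real content is the invariant that $\eta$ and $\kw{aframes}$ statically reproduce PEAK's recursive descent and argument-stack manipulation---exactly the fact the paper records (``the argument stack in a sequence frame $[p, e, a]$ of the PEAK machine is always the one grabbed at path $p$'') and exploits both to define $\kw{PEK.unload}$ via $\kw{aframes}$ and to justify that the machines traverse terms in lockstep. The paper defers the detailed verification to its Coq formalization, and your descent lemma is precisely the auxiliary statement that verification rests on, so the proposal is correct and faithful to the paper's argument.
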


\section{CFG Machine}
\label{sec:cfg}
From the PEK machine, we now derive what Danvy~\cite{danvyvirtual} calls a \emph{virtual machine}: a machine that operates on a structured program with an instruction set, rather than a lambda term.
The abstract machine is factored into a compiler and interpreter for
an instruction set that reflects the cases of the step function.
The compiler pre-computes various aspects of execution that can be
determined statically, so that the interpreter can be simpler and closer to real assembly languages.

\begin{figure}
\vspace{-5mm}
%  
%% TODO: integrate into section; align some seps; all sorts of syntax changes
\[
\begin{array}[t]{r@{\ }l@{\ \ }c@{\ }lll}
  \mathrm{State} \ \ni\ & \rho & =\  & \mathrm{Path} \times \mathrm{Env}
    \times \mathrm{Kont} \\
  \mathrm{Ident} \ \ni\ & \ottnt{p}, \ottnt{q} \\%& \bnf\ & \cdot \sep  \ottmv{n}  \CONS  \ottnt{p}  \\
  \mathrm{Env} \ \ni\ & e & =\ & \mathrm{Path}
    \longrightarrow_{\mathrm{fin}} \mathrm{Val} \\
  \mathrm{Args} \ \ni\ & a & \bnf\ & \cdot \sep   \kwl{ARG}\; \ottnt{p}   ::  a 
    \sep  \kwl{SEQ} \, \ottnt{p}  ::  a  \\
  \mathrm{Kont} \ \ni\ & k & \bnf\ & \cdot \sep  v  \mathbin{\cdot} \, \underline{\ \ } :: k 
    \sep  \ottsym{[}  \ottnt{p}  \ottsym{,}  \ottnt{q}  \ottsym{,}  e  \ottsym{]} :: k  \\
  %% TODO: errors?
  \mathrm{Val} \ \ni\ & v & \bnf\ & \ottmv{x} \sep \ottmv{n} \sep \ottsym{[}  \ottnt{p}  \ottsym{,}  e  \ottsym{]}
\end{array}
\]

\[
\begin{array}[t]{r@{ }l@{\ }r@{\  }ll@{}l}
  \mathrm{Operands} \ \ni\ & o & \bnf\ & \multicolumn{2}{l}{\kwl{VAR} \, \ottmv{x} 
    \sep \kwl{NAT} \, \ottmv{n} \sep \kwl{LOC} \, \ottnt{p} \sep \kwl{LBL} \, \ottnt{p}} \\

  \mathrm{Instructions} \ \ni\ & \ottnt{ins} & \bnf\ 
        & \kwl{CALL} \, o \, \ottsym{(}  \overline{o}  \ottsym{)} \, \ottnt{p} \\
  && |\ & \kwl{TAIL} \, o \, \ottsym{(}  \overline{o}  \ottsym{)} \\
  && |\ & \kwl{MOV} \, o \, \ottnt{p} \sep \kwl{RET} \, o \sep \kwl{POP} \, \ottnt{p} \\
  && |\ & \kwl{IF0} \, o \, \ottnt{p_{{\mathrm{1}}}} \, \ottnt{p_{{\mathrm{2}}}} \sep  \kwl{OP}\; o_{{\mathrm{1}}}   \oplus   o_{{\mathrm{2}}}  \;  \ottnt{p}  \sep  \kwl{OPRET}\;  o_{{\mathrm{1}}}   \oplus   o_{{\mathrm{2}}} \\
\end{array}
\]

%% \[
%% \begin{array}{rcll}
%%   \gamma & : & \mathrm{Env} \rightarrow \mathrm{Operand} \rightarrow \mathrm{DVal} \\
  %% \gamma \, \ottnt{p} \, e & = & \mathrm{lookup}(\ottnt{P}, \ottnt{p}, \ottnt{e}, \ottmv{x})
  %%                       & \mbox{when $ \ottnt{P}  [  \ottnt{p}  ]  = \ottmv{x}$}
  %% \\
  %% \gamma \, \ottnt{p} \, e & = & \ottsym{[}   \eta\;   \ottsym{0}  \CONS  \ottnt{p}    \ottsym{,}  e  \ottsym{]} 
  %%                     & \mbox{when $ \ottnt{P}  [  \ottnt{p}  ]  = \kw{thunk} \, \underline{\ \ }$}
  %% \\
  %% \delta & : & \mathrm{Env} \rightarrow \mathrm{[Operand]} \rightarrow \mathrm{[Kont]} \\
  %%  \delta\;  e  \;  \cdot  & = & . \\
  %%  \delta\;  e  \;  \ottsym{(}    \kwl{ARG}\; \ottnt{p}   ::  a   \ottsym{)}  & = &  \ottsym{(}  \gamma \, \ottsym{(}   \ottsym{0}  \CONS  \ottnt{p}   \ottsym{)} \, e  \mathbin{\cdot} \, \underline{\ \ }  \ottsym{)} ::  \delta\;  e  \;  a   \\
  %%  \delta\;  e  \;  \ottsym{(}   \kwl{SEQ} \, \ottnt{p}  ::  a   \ottsym{)}  & = &  \ottsym{[}  \ottnt{p}  \ottsym{,}   \eta\;   \ottsym{1}  \CONS  \ottnt{p}    \ottsym{,}  e  \ottsym{]} :: \cdot  \\
%% \end{array}
%% \]

\caption{CFG machine states}
\label{fig:cfgfun}
\end{figure}

\begin{figure}[ht]

%% \vspace{-20mm}

\[
\hspace*{-0.9cm}
\begin{array}[t]{r@{\ \  \longrightarrow  \ \ }l@{\quad}lll}
  \mbox{start state} & \mbox{next state} 
  & \mbox{$\mathrm{Compile}(P,p)=$ }
  & \mbox{and} \\
  %% & \mbox{$\gamma \, \ottsym{(}   \ottsym{0}  \CONS  \ottnt{p}   \ottsym{)} \, e =$}  \\
    \hline

   \langle \ottnt{p}  ,\;  e  ,\;  k \rangle 
  &  \langle \ottnt{p'}  ,\;  e'  ,\;   \delta\;  e \;  \overline{o}   +\!\!\!+  k \rangle 
  & \kwl{TAIL} \, o \, \overline{o},\cdot
  &  \kw{eval} \;  e \;  o  = \ottsym{[}  \ottnt{p'}  \ottsym{,}  e'  \ottsym{]}
  \\[1mm]

   \langle \ottnt{p}  ,\;  e  ,\;  k \rangle 
  &  \langle \ottnt{p'}  ,\;  e'  ,\;   \delta\;  e \;  \overline{o}   +\!\!\!+   \ottsym{[}  \ottnt{q}  \ottsym{,}  \ottnt{p''}  \ottsym{,}  e'  \ottsym{]} :: k  \rangle 
  & \kwl{CALL} \, o \, \overline{o} \, \ottnt{q},[ \ottnt{p''} ]
  &  \kw{eval} \;  e \;  o  = \ottsym{[}  \ottnt{p'}  \ottsym{,}  e'  \ottsym{]}
  \\[1mm]

   \langle \ottnt{p}  ,\;  e  ,\;  k \rangle 
  &  \langle \ottnt{p'}  ,\;  e  \ottsym{[}  \ottnt{q}  \mathbin{\mapsto}  v  \ottsym{]}  ,\;  k \rangle 
  & \kwl{MOV} \, o \, \ottnt{q},[ \ottnt{p'} ]
  &  \kw{eval} \;  e \;  o  = v
  \\[1mm]

   \langle \ottnt{p}  ,\;  e  ,\;   \ottsym{[}  \ottnt{q}  \ottsym{,}  \ottnt{p'}  \ottsym{,}  e'  \ottsym{]} :: k  \rangle 
  &  \langle \ottnt{p'}  ,\;  e'  \ottsym{[}  \ottnt{q}  \mathbin{\mapsto}  v  \ottsym{]}  ,\;  k \rangle 
  & \kwl{RET} \, o,\cdot
  &  \kw{eval} \;  e \;  o  = v
  \\[1mm]

   \langle \ottnt{p}  ,\;  e  ,\;   v  \mathbin{\cdot} \, \underline{\ \ } :: k  \rangle 
  &  \langle \ottnt{p'}  ,\;  e  \ottsym{[}  \ottnt{q}  \mathbin{\mapsto}  v  \ottsym{]}  ,\;  k \rangle 
  & \kwl{POP} \, \ottnt{q},[ \ottnt{p'} ]
  \\[1mm]

   \langle \ottnt{p}  ,\;  e  ,\;  k \rangle 
  &  \langle \ottnt{p_{{\mathrm{1}}}}  ,\;  e  ,\;  k \rangle 
  & \kwl{IF0} \, o \, \ottnt{p_{{\mathrm{1}}}} \, \ottnt{p_{{\mathrm{2}}}},\cdot
  &  \kw{eval} \;  e \;  o  = \ottsym{0}
  \\[1mm]

   \langle \ottnt{p}  ,\;  e  ,\;  k \rangle 
  &  \langle \ottnt{p_{{\mathrm{2}}}}  ,\;  e  ,\;  k \rangle 
  & \kwl{IF0} \, o \, \ottnt{p_{{\mathrm{1}}}} \, \ottnt{p_{{\mathrm{2}}}},\cdot
  &  \kw{eval} \;  e \;  o   \neq  \ottsym{0}
  \\[1mm]

   \langle \ottnt{p}  ,\;  e  ,\;  k \rangle 
  &  \langle \ottnt{p'}  ,\;  e  \ottsym{[}  \ottnt{q}  \mathbin{\mapsto}  v_{{\mathrm{1}}}  \means{\oplus}  v_{{\mathrm{2}}}  \ottsym{]}  ,\;  k \rangle 
  &  \kwl{OP}\; o_{{\mathrm{1}}}   \oplus   o_{{\mathrm{2}}}  \;  \ottnt{q} ,[ \ottnt{p'} ]
  &  \kw{eval} \;  e \;  o_{{\mathrm{1}}}  = v_{{\mathrm{1}}},  \kw{eval} \;  e \;  o_{{\mathrm{2}}}  = v_{{\mathrm{2}}}
  \\[1mm]

   \langle \ottnt{p}  ,\;  e  ,\;   \ottsym{[}  \ottnt{q}  \ottsym{,}  \ottnt{p'}  \ottsym{,}  e'  \ottsym{]} :: k  \rangle 
  &  \langle \ottnt{p'}  ,\;  e'  \ottsym{[}  \ottnt{q}  \mathbin{\mapsto}  v_{{\mathrm{1}}}  \means{\oplus}  v_{{\mathrm{2}}}  \ottsym{]}  ,\;  k \rangle 
  &  \kwl{OPRET}\;  o_{{\mathrm{1}}}   \oplus   o_{{\mathrm{2}}} ,\cdot
  &  \kw{eval} \;  e \;  o_{{\mathrm{1}}}  = v_{{\mathrm{1}}},  \kw{eval} \;  e \;  o_{{\mathrm{2}}}  = v_{{\mathrm{2}}}
  
\end{array}
\]

\[
\begin{array}{crcll}
  &  \kw{eval}  : \mathrm{Env} \rightarrow \mathrm{Operand} & \rightarrow & \mathrm{DVal} \\
  &  \kw{eval} \;  e \;  \ottsym{(}  \kwl{VAR} \, \ottmv{x}  \ottsym{)}  & = & \ottmv{x} \\
  &  \kw{eval} \;  e \;  \ottsym{(}  \kwl{NAT} \, \ottmv{n}  \ottsym{)}  & = & \ottmv{n} \\
  &  \kw{eval} \;  e \;  \ottsym{(}  \kwl{LOC} \, \ottnt{p}  \ottsym{)}  & = & e  \ottsym{(}  \ottnt{p}  \ottsym{)} \\
  &  \kw{eval} \;  e \;  \ottsym{(}  \kwl{LBL} \, \ottnt{p}  \ottsym{)}  & = & \ottsym{[}  \ottnt{p}  \ottsym{,}  e  \ottsym{]} \\[2.5mm]
  & \delta : \mathrm{Env} \rightarrow \mathrm{[Operand]} & \rightarrow & \mathrm{[Kont]} \\
  &  \delta\;  e \;  \cdot  & = & . \\
  &  \delta\;  e \;  \ottsym{(}   o :: \overline{o}   \ottsym{)}  & = &  \ottsym{(}   \kw{eval} \;  e \;  o   \mathbin{\cdot} \, \underline{\ \ }  \ottsym{)} ::  \delta\;  e \;  \overline{o}  
\end{array}
\]

  \caption{CFG semantics for a program $P$ }
  \label{fig:cfg}
\end{figure}

When examining values, all but one case of the $ \gamma $ function does
not examine the environment. To do as much work as possible statically
in the compiler, we split the PEK machine's $ \gamma $ into two
functions, $ \overline{\gamma} $ and $\kw{eval}$. The former takes a
path to a syntactic value and computes an ``operand'', which may be a
free variable $\kwl{VAR} \, \ottmv{x}$, literal number $\kwl{NAT} \, \ottmv{n}$, a bound local
variable $\kwl{LOC} \, \ottnt{p}$ represented by the path to its binder, or a label
$\kwl{LBL} \, \ottnt{p}$ representing a position in the source program. The $ \kw{eval} $ function only has to look up bound
variables and attach an environment to labels in order produce dynamic
values from operands. It is easy to verify that
$\gamma \, \ottnt{p} \, e =  \kw{eval} \;  e \;  \overline{\gamma} \, \ottnt{p} $.

\paragraph{The CFG virtual machine.}
The target machine operates on a simple instruction set, whose 
operational semantics are shown in Figure~\ref{fig:cfg}. At
the CFG level, each program path $ \ottnt{P}  [  \ottnt{p}  ] $ now corresponds to
an instruction that communicates information about the source term and
argument stack at $\ottnt{p}$ to the virtual machine's transition function.
The instruction set contains 8 instructions:
\begin{itemize}
%\item $ \kwl{NOP} $, which does nothing
\item $ \kwl{CALL} $ and $ \kwl{TAIL} $, which perform regular and tail function calls (tail calls do not add return frames to the stack)
\item $ \kwl{MOV} $, which assigns a value to a variable
\item $ \kwl{OP} $, which performs an arithmetic operation and stores the result in a variable
\item $ \kwl{RET} $ and $ \kwl{ORET} $, which pop return frames and return a direct value and a computed value respectively
\item $ \kwl{POP} $, which pops arguments pushed by $ \kwl{CALL} $
\item $ \kwl{IF0} $, which performs a conditional branch
%\item $\kwl{SWI}  \,$, which chooses the next label from a list
\end{itemize}

% using the two previously defined functions, $ \overline{\gamma} $ and
% type $\mathrm{Ins}$ of instruction is shown in Fig. ?, and serves to

The states are the same as those of the PEK machine. The step function also corresponds closely to that of the PEK
machine, where the case analysis on $ \kw{aframes} $ is instead
handled by specialized instructions. Some transitions that were
distinguished in the PEK machine are represented by a single
instruction: for example, both lambdas and $ \kw{prd} $ are
represented by $ \kwl{MOV} $ instructions in cases where the enclosing frame is known. The step for $ \kw{force} $, on the other hand, has been
split into tail and non-tail calls.

Note that $ \kwl{SEQ} $ frames now contain two paths: the path bound to
the returned value in the environment, and the path of the next
instruction to execute. Though these will always be $\ottnt{p}$ and
$ \ottsym{1}  \CONS  \ottnt{p} $, respectively, this change allows us to treat paths
abstractly. The CFG machine never inspects the structure of paths or
creates new paths other than the ones already present in instructions.
Paths are only compared for equality by the environment and
instruction lookup functions, and thus could be replaced with abstract
identifiers.

\paragraph{Producing control flow graphs.} 
The output of our compiler is a mapping from paths to instructions. Instructions can
therefore reference positions in the CFG directly via ``labels'', a
common feature of real-world virtual machines. From
\[ 
 \kw{PEAK.step}  : \mathrm{Term} \to  \kw{PEAK.state}  \to  \kw{PEAK.state}  
\]
we obtain
\begin{align*}
 \kw{compile}  &: \mathrm{Term} \to \mathrm{Path} \to \mathrm{Ins} \times \mathrm{Path}^+ \\
 \kw{CFG.step}  &: (\mathrm{Path} \to \mathrm{Ins} \times \mathrm{Path}^+) \to
                       \kw{CFG.state}  \to  \kw{CFG.state} 
\end{align*}
where a CFG associates each program point with a pair of an instruction and a list of next program points.

\begin{figure*}[t]
  \begin{minipage}{0.6\linewidth}
  \[
  \begin{array}[t]{rlll}
    & \kw{compile} \, \ottnt{p} = 
    & \mbox{when $ \ottnt{P}  [  \ottnt{p}  ]  =$}  
    & \mbox{and when:}  \\
      \hline

   \\
     & \kwl{TAIL} \, \ottsym{(}  \overline{\gamma} \, \ottsym{(}   \ottsym{0}  \CONS  \ottnt{p}   \ottsym{)}  \ottsym{)} \, \overline{o}, \cdot & \kw{force} \,  \underline{\ \ } _{ \ottsym{0} }  & \kw{args} \, \ottnt{p} = (\overline{o},\kw{none})

   \\
     & \kwl{CALL} \, \ottsym{(}  \overline{\gamma} \, \ottsym{(}   \ottsym{0}  \CONS  \ottnt{p}   \ottsym{)}  \ottsym{)} \, \overline{o} \, \ottnt{q}, [  \eta\;   \ottsym{1}  \CONS  \ottnt{q}   ] & \kw{force} \,  \underline{\ \ } _{ \ottsym{0} }  & \kw{args} \, \ottnt{p}
                                                          = (\overline{o}, \kw{some}\;\ottnt{q})

   \\
     & \kwl{IF0} \, \ottsym{(}  \overline{\gamma} \, \ottsym{(}   \ottsym{0}  \CONS  \ottnt{p}   \ottsym{)}  \ottsym{)} \, \ottsym{(}   \eta\;   \ottsym{1}  \CONS  \ottnt{p}    \ottsym{)} \, \ottsym{(}   \eta\;   \ottsym{2}  \CONS  \ottnt{p}    \ottsym{)}, \cdot &  \kw{if0} \;   \underline{\ \ } _{ \ottsym{0} }  \;   \underline{\ \ } _{ \ottsym{1} }  \;   \underline{\ \ } _{ \ottsym{2} }   & 

   \\
     & \kwl{MOV} \, \ottsym{(}  \overline{\gamma} \, \ottsym{(}   \ottsym{0}  \CONS  \ottnt{p}   \ottsym{)}  \ottsym{)} \, \ottnt{q}, [  \eta\;  \ottsym{(}   \ottsym{1}  \CONS  \ottnt{q}   \ottsym{)}  ] & \kw{prd} \,  \underline{\ \ } _{ \ottsym{0} }  & \kw{aframes} \, \ottnt{p}=  \kwl{SEQ} \, \ottnt{q}  ::  a 

   \\
     & \kwl{RET} \, \ottsym{(}  \overline{\gamma} \, \ottsym{(}   \ottsym{0}  \CONS  \ottnt{p}   \ottsym{)}  \ottsym{)}, \cdot & \kw{prd} \,  \underline{\ \ } _{ \ottsym{0} }  & \kw{aframes} \, \ottnt{p} = \cdot

   \\
     & \kwl{MOV} \, \ottsym{(}  \overline{\gamma} \, \ottsym{(}   \ottsym{0}  \CONS  \ottnt{p'}   \ottsym{)}  \ottsym{)} \, \ottnt{p}, [  \eta\;   \ottsym{0}  \CONS  \ottnt{p}   ] &  \lambda \ottmv{x} .  \underline{\ \ } _{ \ottsym{0} }   & \kw{aframes} \, \ottnt{p} =   \kwl{ARG}\; \ottnt{p'}   ::  a 

   \\
     & \kwl{POP} \, \ottnt{p}, [  \eta\;  \ottsym{(}   \ottsym{0}  \CONS  \ottnt{p}   \ottsym{)}  ]&  \lambda \ottmv{x} .  \underline{\ \ } _{ \ottsym{0} }   & \kw{aframes} \, \ottnt{p} = \cdot

   \\
     &  \kwl{OP}\; \ottsym{(}  \overline{\gamma} \, \ottsym{(}   \ottsym{0}  \CONS  \ottnt{p}   \ottsym{)}  \ottsym{)}   \oplus   \ottsym{(}  \overline{\gamma} \,  \ottsym{1}  \CONS  \ottnt{p}   \ottsym{)}  \;  \ottnt{q} , [  \eta\;   \ottsym{1}  \CONS  \ottnt{q}   ] &  \underline{\ \ } _{ \ottsym{0} }   \oplus   \underline{\ \ } _{ \ottsym{1} }  & \kw{aframes} \, \ottnt{p} =  \kwl{SEQ} \, \ottnt{q}  ::  a 

   \\
     &  \kwl{OPRET}\;  \ottsym{(}  \overline{\gamma} \, \ottsym{(}   \ottsym{0}  \CONS  \ottnt{p}   \ottsym{)}  \ottsym{)}   \oplus   \ottsym{(}  \overline{\gamma} \,  \ottsym{1}  \CONS  \ottnt{p}   \ottsym{)} , \cdot &  \underline{\ \ } _{ \ottsym{0} }   \oplus   \underline{\ \ } _{ \ottsym{1} }  & \kw{aframes} \, \ottnt{p} = \cdot

  \end{array}
  \]
  \end{minipage}

\[  \begin{array}{rcll}
     \overline{\gamma}  & : & \mathrm{Path} \rightarrow \mathrm{Operand} \\
    \overline{\gamma} \, \ottnt{p} & = & \mathrm{\overline{lookup}}(\ottnt{P}, \ottnt{p}, \ottmv{x})
                          & \mbox{when $ \ottnt{P}  [  \ottnt{p}  ]  = \ottmv{x}$}

    \\
    %% \gamma \, \ottnt{p} \, e & = & e  \ottsym{(}  \ottnt{p'}  \ottsym{)}
    %%                       & \mbox{when $ \ottnt{P}  [  \ottnt{p}  ]  = \ottmv{x}$,}\\
    %%                 & & &  \mbox{\quad $\ottmv{x}$ bound by $ \ottnt{P}  [  \ottnt{p'}  ]  =  \lambda \ottmv{x} . \underline{\ \ }  $ or} \\
    %%                 & & & \mbox{\quad by $ \ottnt{P}  [  \ottnt{p'}  ]  = \underline{\ \ } \, \kw{to} \, \ottmv{x} \, \kw{in} \, \underline{\ \ } $} 

    %% \\
    %% \gamma \, \ottnt{p} \, e & = & \ottsym{[}   \ottsym{0}  \CONS  \ottnt{p'}   \ottsym{,}  e  \ottsym{]}
    %%                     & \mbox{when $ \ottnt{P}  [  \ottnt{p}  ]  = \ottmv{x}$, $\ottmv{x}$ bound} \\
    %%                     &  & & \mbox{\quad by $ \ottnt{P}  [  \ottnt{p'}  ]  = \texttt{\textcolor{red}{<<no parses (char 1): r***ec x. \_ >>}}$} 

    %% \\                            
    \overline{\gamma} \, \ottnt{p} & = & \kwl{LBL} \, \ottsym{(}   \eta\;   \ottsym{0}  \CONS  \ottnt{p}    \ottsym{)} 
                        & \mbox{when $ \ottnt{P}  [  \ottnt{p}  ]  = \kw{thunk} \, \underline{\ \ }$}

    \\
     \overline{\gamma} \, \ottnt{p} & = & \ottmv{n} 
                         & \mbox{when $ \ottnt{P}  [  \ottnt{p}  ]  = \ottmv{n}$} 
\end{array}
\]

  \caption{Lock-step compilation strategy for the term
    $\ottnt{P}$.}
  \label{fig:compile}
\end{figure*}

The compiler is shown in Figure~\ref{fig:compile}. For each subterm, the compiler records as much information about its execution
as can be computed statically. The control constructs, sequencing, application, and $ \kw{letrec} $,
are not compiled at all, since control flow is encoded in the CFG edges.
For abstractions and $ \kw{prd} $ when the
corresponding argument frame is present, the path to bind in the
environment, the next instruction, and the operand computed from the value
being bound are recorded. Otherwise, for a lambda we can only record
the next instruction, and for a $ \kw{prd} $, the computed operand.
Arithmetic operators follow a similar pattern. For the $ \kw{force} $
instruction, we record the argument stack up to the nearest $ \kwl{SEQ} $
frame. If one is found, we record the the path to bind and the path to
the next instruction.

%% CFG

\paragraph{Correctness.}
CFG machine states are the same as PEK machine states, so there is no need for unloading and the well-formedness conditions are unchanged.

\begin{lemma} (CFG Machine Correctness) For all terms $\ottnt{P}$ and CFG
  states $s$ that are well-formed with respect to $\ottnt{P}$, 
\[
\begin{array}{c}
\kw{PEK.step} \, \ottnt{P} \, s =
 \kw{CFG.step} \, \ottsym{(}  \kw{compile} \, \ottnt{P}  \ottsym{)} \, s
\end{array}
\]
\end{lemma}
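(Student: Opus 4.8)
The plan is to capitalize on the remark preceding the statement: CFG states and PEK states are literally the same objects, so there is no unloading map to interpose and the claim collapses to the \emph{extensional} equality of the two step functions $\kw{PEK.step}\,P$ and $\kw{CFG.step}\,(\kw{compile}\,P)$ on every well-formed $s = \langle p, e, k \rangle$. First I would perform case analysis on the syntactic head $P[p]$, which is exactly the discriminant used by the PEK rules (Figure~\ref{fig:pek}), by $\kw{compile}$ (Figure~\ref{fig:compile}), and by the CFG interpreter (Figure~\ref{fig:cfg}). Because $\kw{compile}$ is defined pointwise, $(\kw{compile}\,P)(p)$ unfolds in each branch to the single instruction prescribed for $P[p]$ and $\kw{aframes}\,p$; substituting it into the matching clause of $\kw{CFG.step}$ leaves a next-state that I compare directly against the PEK next-state. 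The control constructs --- sequencing, application, and $\kw{letrec}$ --- never arise as $P[p]$ for a well-formed resting state, since the successor function $\eta$ is built to skip past them onto the next redex and they are emitted as no instruction at all, so no branch is needed for them.

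Two auxiliary facts carry the real content. The first is the operand identity $\gamma\,p\,e = \kw{eval}\,e\,(\overline{\gamma}\,p)$ already announced in the text, which I would discharge by a three-way case split on $P[p]$: the variable and literal cases are immediate, and the only interesting one is $\kw{thunk}$, where $\overline{\gamma}$ statically emits the label $\kwl{LBL}\,(\eta\,(0\CONS p))$ and $\kw{eval}$ dynamically reattaches $e$ to rebuild the closure $[\eta(0\CONS p), e]$, matching $\gamma$. The second is an argument-stack lemma: $\kw{args}\,p$ splits $\kw{aframes}\,p$ into its leading block of $\kwl{ARG}$ frames --- mapped, via $\overline{\gamma}$ applied to each argument sub-path, to the operand list $\overline{o}$ --- together with the optional trailing $\kwl{SEQ}\,q$ frame recorded as $\kw{none}$ or $\kw{some}\,q$. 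Combined with the operand identity, this yields the pointwise agreement $\delta\,e\,(\kw{aframes}\,p) = \delta\,e\,\overline{o}$ between PEK's frame-builder and CFG's, together with an exact account of the single return frame that PEK's $\delta$ appends when it hits the $\kwl{SEQ}$.

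With these in hand, the reducible cases are short computations. For $P[p] = \kw{prd}\,\underline{\ \ }$ with $\kw{aframes}\,p = \kwl{SEQ}\,q \CONS a$, the PEK rule steps to $\langle \eta(1\CONS q), e[q\mapsto v], k \rangle$ with $v = \gamma(0\CONS p)\,e$; $\kw{compile}$ emits $\kwl{MOV}\,(\overline{\gamma}(0\CONS p))\,q$ with successor $\eta(1\CONS q)$, and the $\kwl{MOV}$ clause of $\kw{CFG.step}$ binds $q$ to $\kw{eval}\,e\,(\overline{\gamma}(0\CONS p))$, so the two coincide by the operand identity. The $\kw{prd}$-with-empty-$\kw{aframes}$/$\kwl{RET}$ case, the two $\lambda$ cases ($\kwl{MOV}$ under an $\kwl{ARG}$ frame, $\kwl{POP}$ under a $v\cdot\underline{\ \ }$ frame), the arithmetic cases ($\kwl{OP}$ and $\kwl{OPRET}$), and $\kw{if0}$/$\kwl{IF0}$ all follow the same one-line template, each appealing only to the operand identity.

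I expect the $\kw{force}$ case to be the main obstacle, since it is the only one in which the $\kwl{TAIL}$/$\kwl{CALL}$ distinction, the operand identity, and the reconciliation of the two $\delta$ functions must all be handled at once: PEK pushes $\delta\,e\,(\kw{aframes}\,p)$ onto $k$, whereas CFG pushes $\delta\,e\,\overline{o}$, and in the non-tail ($\kwl{CALL}$) sub-case additionally interposes a return frame exactly where PEK's $\delta$ had folded the trailing $\kwl{SEQ}$ into a return frame. The delicate point is getting the stored continuation path and environment inside that return frame to line up between the two machines; this is precisely where the argument-stack lemma does its work, and where the well-formedness hypothesis is needed --- it guarantees that every argument-frame path is a genuine suffix of $p$ with a binding present in $e$, so that the side conditions $\gamma(0\CONS p)\,e = [p', e']$ and the various lookups succeed and the intended rule (rather than a stuck or default branch) fires on both sides. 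Once the $\kw{force}$ branch is settled, assembling all branches yields the equality and completes the proof.
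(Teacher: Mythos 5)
Your proposal is correct and follows essentially the same route as the paper, which (deferring details to the Coq development) rests on exactly the ingredients you identify: CFG states coincide with PEK states so the lemma is extensional equality of step functions, the proof is a case analysis on $P[p]$ driven by the pointwise definition of $\kw{compile}$, the operand identity $\gamma \, \ottnt{p} \, e =  \kw{eval} \;  e \;  \overline{\gamma} \, \ottnt{p} $ stated in Section~\ref{sec:cfg}, and the correspondence between $\kw{args}$/$ \kw{aframes} $ that reconciles the two $\delta$ functions and the $ \kwl{TAIL} $/$ \kwl{CALL} $ split for $ \kw{force} $. Your identification of the $ \kw{force} $ case as the only place where the return-frame path and environment must be lined up, with well-formedness guaranteeing the side conditions fire on both sides, matches the paper's intended argument.
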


%\subsection{Modified compilation strategy.}
%\label{sec:cfg1}
%\steve{Cut this?}
%The steps of the CFG machine we have derived correspond exactly to
%those of the CEK machine. However, recall from Section~\ref{sec:proof} that steps of
%the CEK machine do not correspond exactly to steps of CBPV terms. In
%particular, the search rules in the SOS for CBPV are applied
%recursively, while each application requires a step in the CEK
%machine. These correspond to $ \kwl{NOP} $ instructions in the compiled
%program.  Fortunately, we can eliminate these
%steps without needing a new virtual machine: we only need a new
%compilation strategy. This improved compilation strategy will never
%emit instructions that jump to structural paths, i.e., those that
%point to sequences, applications, and projections. Instead, the
%instructions emitted will always skip to the location of the next
%non-trivial instruction. We denote the improved compiler as $\texttt{\textcolor{red}{<<no parses (char 8): compile1*** >>}}$. 

%\paragraph{Unloading, well-formedness of states, and correctness for
%  the improved compilation strategy.} 
%\begin{lemma} (Compilation Strategy Correctness) For all terms $P$ and CFG
%  states $s$ that are well-formed with respect to $P$, 
%  \begin{align*}
%    \mathrm{step_{CFG}^n}(compile(P), \mathrm{unload_{CFG1}} (s)) = \\
%    \mathrm{unload_{CFG1}} (\mathrm{step_{CFG}} (compile(P), s))
%  \end{align*}
%\end{lemma}
%
We have now seen how to translate from CBPV terms to CFG programs
through a series of machines, each one of which comes closer to an
assembly-language-like execution model. We define the top-level \kw{unload} function
as the composition of each of the unloading functions, translating a CFG state
under the improved compilation strategy to a CBPV term. At each level, we have proved a
lemma stating that the transformation preserves the semantics of
machine states. As described in Section~\ref{sec:proof}, we can combine the lemmas
at each level to obtain correctness properties for the entire
translation from CBPV to CFG.

%% Local Variables:
%% fill-column: 70
%% eval: (auto-fill-mode)
%% eval: (flyspell-mode)
%% eval: (setq-local auto-hscroll-mode nil)
%% eval: (setq-local sentence-end-double-space nil)
%% End:

%% SAZ: Deleted this section
% \section {Controlling Code Generation}
% \label{sec:codegen}
% \input{temp/codegen}

\section {Applications}
\label{sec:applications}
%\steve{We need to expand this section with more detailed discussion of
%  porting equivalences from CBPV to CFG.  It should also include
%  example optimizations.}
%  
Our precise connection between Levy's call-by-push-value calculus and
SSA-form control flow graphs has useful applications in compiler theory and in practice. 
On the theoretical side, it precisely relates %a step towards precisely relating 
work on imperative and on functional compilation. Moreover, for compiling functional languages, 
it suggests a new compilation path using call-by-push-value as an intermediate representation.

On the practical side, our connection reduces the cost of verifying a wide class of  SSA-form CFGs optimizations. 
%~\cite{haskellpaper,zenapaper} 
As we will demonstrate in Section~\ref{ssec:opt:examples}, reasoning about optimizations directly on SSA-form CFGs 
is expensive. Just as functional 
languages increase programmer productivity, they also increase verification productivity.  

We have developed a sound equational theory for our CBPV operational semantics.
Many optimizations can be expressed as rewrites in context and justified by
the equational theory. % of the associated calculus. 
For example, beta reduction 
can express transformations like constant folding, function inlining, and dead
code elimination~\cite{appeljim}.
While our low-level execution model does not enable new optimizations
on control flow graphs, it dramatically reduces the cost of verifying %makes it possible to directly express
transformations that could previously only be reasoned about directly
using CFG machine semantics. For example, Fig. \ref{fig:opt} shows a
CBPV program that computes the sum $(2 + 2) + (3 + 3)$ using an
auxiliary function that doubles its argument. In the first version of
the program, the function $f$ is bound using a lambda abstraction; in
the second, it is bound using $ \kw{letrec} $. The two programs reduce to a
common term, so the transformation can be justified by the CBPV
equational theory. The compiled CFG programs differ only in that the
auxiliary function is called directly in the optimized program.

Note that this optimization cannot be expressed simply as carrying out
the application in the original program. Substituting for $f$ would
duplicate the function body, which is not always desirable. The
ability to express a direct call crucially depends on having two types
of binders with different execution behaviors.

The rest of this section contains a summary of our CBPV equational theory as well as a demonstration of its value in verifying typical examples of compiler optimizations. 
Once the sound equational theory was developed, verifying each of the optimizations was entirely straightforward and took less than an hour. We demonstrate the corresponding CFG transformations produced by such optimizations to illustrate how difficult reasoning about such optimizations would be had we reasoned directly on the CFGs.

\lstdefinelanguage{CFG}{%
  morekeywords={%
    POP,MOV,TAIL,IF0,RET,SUB,TAIL,ADD % keywords go here
  },%
  morecomment=[l]{//},
  otherkeywords={:,@}, % operators go here
  basicstyle={\small\ttfamily},
  keywordstyle={\bfseries},
  keywordstyle={[2]\itshape}, % style for types
  commentstyle={\rmfamily\itshape},
  keepspaces,
  mathescape % optional
}[keywords,comments,strings]%

 %% (0, (["TAIL"; "@1"], [])); 
 %%
 %% (2, (["POP"], [3]));
 %% (3, (["%5 = ADD %2 %2"], [4])); 
 %% (4, (["RET %5"], []));
 %%
 %% (1, (["PMOV"; "%1 = @2"], [6])); 
 %% (6, (["%8 = CALL %1"; "$2"], [7]));
 %% (7, (["%10 = CALL %1"; "$3"], [9]));
 %% (9, (["%12 = ADD %8 %10"], [11])); 
 %% (11, (["RET %12"], []))

 %% (0, (["TAIL"; "@1"], [])); 
 %%
 %% (3, (["POP"], [10]));
 %% (10, (["%12 = ADD %3 %3"], [11])); 
 %% (11, (["RET %12"], []))]
 %%
 %% (1, (["%5 = CALL @3"; "$2"], [4]));
 %% (4, (["%7 = CALL @3"; "$3"], [6])); 
 %% (6, (["%9 = ADD %5 %7"], [8]));
 %% (8, (["RET %9"], []));

\begin{figure}[tp]
%% Example knownfn : tm :=
%% Force (Rec (App
%%               (Rec (Lam (Seq (Aop OPlus (Var 0) (Var 0))
%%                              (Prd (Var 0)))))
%%               (Lam (Seq (App (Nat 2) (Force (Var 0)))
%%                    (Seq (App (Nat 3) (Force (Var 1)))
%%                    (Seq (Aop OPlus (Var 1) (Var 0))
%%                         (Prd (Var 0)))))))).
\begin{minipage}[t]{0.5\textwidth}
\[
\begin{array}[t]{l}
 \kw{thunk} \ ( \kw{force} \ ( \kw{thunk} \ (\\
\quad ( \kw{force} \ ( \kw{thunk} \ (a \oplus b\ \kw{to}\ y\ \kw{in}\ \kw{prd} \, \ottmv{y})))\ \kw{to}\ x\\ \quad\kw{in}\ \kw{prd} \, \ottmv{x})))\end{array}
\]
\end{minipage}
\begin{minipage}[t]{0.2\textwidth}
\begin{lstlisting}[language=CFG]
0:  TAIL @1

1:  x = CALL @3 [2]
2:  RET x

3:  y = ADD a b [4]
4:  RET y
\end{lstlisting}
\end{minipage}

\begin{minipage}[t]{.5\textwidth}
\[
\begin{array}[t]{l}
 \kw{thunk} \ ((a \oplus b\ \kw{to}\ y\ \kw{in}\ \kw{prd} \, \ottmv{y})\\
\qquad  \kw{to}\ x\ \kw{in}\ \kw{prd} \, \ottmv{x})
\end{array}
\]
\end{minipage}
\begin{minipage}[t]{0.2\textwidth}
\begin{lstlisting}[language=CFG]
0:  y = ADD a b [1]
1:  MOV y x [2]
2:  RET x
\end{lstlisting}
\end{minipage}

\begin{minipage}[t]{.5\textwidth}
\[
\begin{array}[t]{l}
 \kw{thunk} \ (a \oplus b)
\end{array}
\]
\end{minipage}
\begin{minipage}[t]{0.2\textwidth}
\begin{lstlisting}[language=CFG]
0:  ORET ADD a b
\end{lstlisting}
\end{minipage}

\caption{Three equivalent CBPV terms after successive optimizations
  (left) and their corresponding CFG representations.  The first
  optimization pass inlines a function call and merges blocks (by
  eliminating $ \kw{force} $--$ \kw{thunk} $ pairs).  The second optimization
  pass removes the $ \kwl{MOV} $ by eliminating the unnecessary
  $\kw{prd} \, \ottmv{x}$.
}
\label{fig:opt}
\end{figure}

%% \begin{minipage}[t]{.33\textwidth}
%% \[
%% \begin{array}[t]{l}
%% \kw{power} =\\
%% \lambda n. \lambda m.\\
%% \quad 1  \mathbin{\cdot}  m  \mathbin{\cdot}  
%% \overline{\kw{prj}_0} \mathbin{\cdot}   \kw{force}  (\texttt{\textcolor{red}{<<no parses (char 1): r***ec >>}}\; x.\\
%% \\
%% \qquad \qquad  \langle  \lambda m_1. \lambda a_1.\\
%% \qquad \qquad \qquad \kw{if0}\;m_1\\
%% \\
%% \qquad \qquad \qquad \quad ( \kw{prd} \; a_1)\\
%% \\
%% \qquad \qquad \qquad \quad ((m_1 - 1)\; \kw{to} \;m_2\; \kw{in} \\
%% \qquad \qquad \qquad \qquad \quad m_2  \mathbin{\cdot}  0  \mathbin{\cdot}  a_1  \mathbin{\cdot} 
%%   \overline{\kw{prj}_1} \mathbin{\cdot}  ( \kw{force} \;x))\; ,\\
%% \\
%% \qquad \qquad \  \lambda m_3. \lambda a_2. \lambda m_k.\\
%% \qquad \qquad \qquad \kw{if0}\;m_3\;\\
%% \\
%% \qquad \qquad \qquad \quad
%%   (a_2 \mathbin{\cdot} m_k \mathbin{\cdot} \overline{\kw{prj}_0} \mathbin{\cdot} ( \kw{force} \;x)) \\
%% \\
%% \qquad \qquad \qquad \quad ((m_3 - 1)\; \kw{to} \;m_4\; \kw{in} \\
%% \qquad \qquad \qquad \quad (a_2 + n)\; \kw{to} \;a_3\; \kw{in} \\
%% \qquad \qquad \qquad \qquad \quad \quad
%%   m_k \mathbin{\cdot} a_3 \mathbin{\cdot} m_4 \mathbin{\cdot} \overline{\kw{prj}_1} \mathbin{\cdot} ( \kw{force} \;x))\\
%% \qquad \qquad  \rangle )\\
%% \end{array}
%% \]
%% \end{minipage}

\lstdefinelanguage{CFG}{%
  morekeywords={%
    POP,MOV,TAIL,IF0,RET,SUB,TAIL,ADD % keywords go here
  },%
  morecomment=[l]{//},
  otherkeywords={:,@}, % operators go here
  basicstyle={\small\ttfamily},
  keywordstyle={\bfseries},
  keywordstyle={[2]\itshape}, % style for types
  commentstyle={\rmfamily\itshape},
  keepspaces,
  mathescape % optional
}[keywords,comments,strings]%

 %% (0, (["TAIL"; "@1"], [])); 
 %%
 %% (2, (["POP"], [3]));
 %% (3, (["%5 = ADD %2 %2"], [4])); 
 %% (4, (["RET %5"], []));
 %%
 %% (1, (["PMOV"; "%1 = @2"], [6])); 
 %% (6, (["%8 = CALL %1"; "$2"], [7]));
 %% (7, (["%10 = CALL %1"; "$3"], [9]));
 %% (9, (["%12 = ADD %8 %10"], [11])); 
 %% (11, (["RET %12"], []))

 %% (0, (["TAIL"; "@1"], [])); 
 %%
 %% (3, (["POP"], [10]));
 %% (10, (["%12 = ADD %3 %3"], [11])); 
 %% (11, (["RET %12"], []))]
 %%
 %% (1, (["%5 = CALL @3"; "$2"], [4]));
 %% (4, (["%7 = CALL @3"; "$3"], [6])); 
 %% (6, (["%9 = ADD %5 %7"], [8]));
 %% (8, (["RET %9"], []));

\begin{figure*}[htbp]
%% Example knownfn : tm :=
%% Force (Rec (App
%%               (Rec (Lam (Seq (Aop OPlus (Var 0) (Var 0))
%%                              (Prd (Var 0)))))
%%               (Lam (Seq (App (Nat 2) (Force (Var 0)))
%%                    (Seq (App (Nat 3) (Force (Var 1)))
%%                    (Seq (Aop OPlus (Var 1) (Var 0))
%%                         (Prd (Var 0)))))))).
\begin{minipage}[t]{.5\textwidth}
\[
\begin{array}[t]{l}
 \kw{letrec} \ x_1 = \lambda a.\ a + a\ \kw{to}\ x\ \kw{in}\ \kw{prd} \, \ottmv{x},\\
\quad \quad x_2 =  2 \cdot  \kw{force} \ x_1\ \kw{to}\ x\\
\quad\quad\quad\kw{in}\ 3 \cdot  \kw{force} \ x_1\ \kw{to}\ y\\
\quad\quad\quad\kw{in}\ x + y\ \kw{to}\ z\\
\quad\quad\quad\kw{in}\  \kw{prd} \ z\\
\kw{in}\  \kw{force} \ x_2
\end{array}
\]
\end{minipage}
\begin{minipage}[t]{0.2\textwidth}
\begin{lstlisting}[language=CFG]
0:  TAIL @1

1:  a = POP [2]
2:  w = ADD a a [3]
3:  RET w

4:  x = CALL @1 2 [5]
5:  y = CALL @1 3 [6]
6:  z = x + y [7]
7:  RET z
\end{lstlisting}
\end{minipage}

\begin{minipage}[t]{.5\textwidth}
\[
\begin{array}[t]{l}
 \kw{force} \ ( \kw{thunk} \ (\\
\quad  \kw{thunk} \ (\lambda a.\ a + a\ \kw{to}\ x\ \kw{in}\ \kw{prd} \, \ottmv{x})\ \cdot\\
\quad \lambda f.\ 2 \cdot  \kw{force} \ f\ \kw{to}\ x\\
\quad\quad\kw{in}\ 3 \cdot  \kw{force} \ f\ \kw{to}\ y\\
\quad\quad\kw{in}\ x + y\ \kw{to}\ z\\
\quad\quad\kw{in}\  \kw{prd} \ z))
\end{array}
\]
\end{minipage}
\begin{minipage}[t]{0.2\textwidth}
\begin{lstlisting}[language=CFG]
0:  TAIL @4
1:  a = POP [2]
2:  w = ADD a a [3]
3:  RET w

4:  MOV @1 f [5]
5:  x = CALL f 3 [6]
6:  y = CALL f 3 [6]
6:  z = x + y [7]
7:  RET z
\end{lstlisting}
\end{minipage}

\caption{
  An optimization replacing an indirect call to $f$ with a direct call.
  \textbf{Left}: The original and optimized CBPV program. \textbf{Right}:
  The resulting CFG programs.}
\label{fig:opt2}
\end{figure*}

\subsection{Equational Theory}
 
 Compilers rely on program equivalences when optimizing and transforming programs.
In the context of verified compilation, as found in the
CompCert~\cite{compcert} and Cake ML~\cite{cakeml} projects, formal verification of
particular program equivalences is crucial: the correctness of classic
optimizations like constant folding, code inlining, loop unrolling,
etc., hinges on such proofs.  Techniques that facilitate
such machine-checked proofs therefore have the potential for broad
impact in the domain of formal verification.

Establishing program equivalence is a well-known and long-studied problem. %~\cite{} .
The gold standard of what it means for two (potentially open) program
terms $\ottnt{M_{{\mathrm{1}}}}$ and $\ottnt{M_{{\mathrm{2}}}}$ to be equal is \textit{contextual
  equivalence}.  Although the exact technical definition varies slightly 
from language to language (and we formalized our precise
definition in Coq), the intuition is that $\ottnt{M_{{\mathrm{1}}}}$ is contextually
equivalent to $\ottnt{M_{{\mathrm{2}}}}$ if $C[ \ottnt{M_{{\mathrm{1}}}} ]$ ``behaves the same'' as $C[ \ottnt{M_{{\mathrm{2}}}} ]$
for every closing context $C[-]$.  Here, ``behaves the same'' can simply be
that both programs terminate or both diverge.  Such contextual
equivalences fully justify program optimizations where we can think of
replacing a less-optimal program $\ottnt{M_{{\mathrm{1}}}}$ by a better $\ottnt{M_{{\mathrm{2}}}}$ in the
program context $C[-]$, without affecting the intended behavior of the
program. 

While the programming languages literature is filled with powerful
general techniques for establishing program equivalences using
pen-and-paper proofs, few of these general techniques have been 
mechanically verified in a theorem prover.
%Moreover, some of them are quite difficult to apply in practice.  For
%example, complete methods, such as applicative
%bisimulation~\cite{applicative} and ``closed-instances of uses'' (CIU)
%techniques~\cite{CIU}, require quantification over \textit{all} closed
%function arguments or closing contexts (respectively), which
%significantly complicates the proofs.

Lassen introduced the notion of
\textit{eager normal form bisimulations}~\cite{StovringL09}, which is applicable
to the call-by-value lambda calculus and yet avoids quantification
over all arguments or contexts.  Such normal form simulation is
\textit{sound} with respect to contextual equivalence but it is not
\textit{complete} (this is the price to be paid for the
easier-to-establish equivalences); however, it is adequate for many
equivalence proofs because the relation contains reduction and is a
congruence.

Lassen's approach is appealing for formal verification because it
relies on simple co-inductive relations defined in terms of the structural
operational semantics.  Inspired by the this potential, 
we developed a proof technique based on normal form bisimulations for CBPV and
formalized its meta-theory in Coq.

%The broadest notion of program equivalence is \emph{contextual equivalence} which equates two terms if their behavior is the same in all program contexts. The universal quantification over contexts makes proving program equivalence using the definition directly impractical~\cite{StovringL09}. 

%Similar to our prior trick where we inductively define parallel compatible closure to get around the universal quantification over contexts when defining the equational theory, we also inductively define the \emph{compatible closure} over terms in order to avoid the universal quantification in the definition of contextual equivalence. 
%In fact, as we are only interested in wellformed terms we define a \emph{conditional compatible closure} which restricts the context to that of terms which respect the condition. The condition is later used to restrict the context to wellformed terms.

The soundness theorem states that our equational theory is included in contextual equivalence. 
The proof of this theorem is involved and it is not the focus of this paper; it will be explained in detail in a separate note. 
In summary, the proof follows directly from three main theorems:
 The first theorem states that the equational theory is a compatible closure on wellformed terms, meaning that if two wellformed terms are related by the equational theory then so are terms in their contextual closure. 
 The second theorem states that wellformed terms related by our normal form bisimulation relation co-terminate. 
 Our definition of normal form bisimulation is specifically designed to simplify this proof. 
 The third and most involved theorem states that wellformed terms that are related by the equational theory are also related by the normal form bisimulation relation.  
 
 \subsection{Examples}
 \label{ssec:opt:examples}

The following table summarizes various desired CFG compiler optimizations that are easily proven correct using our CBPV equational theory by term rewriting. 

 \vspace{1em}
 \begin{tabular}{l|l}
 \label{tab:opts}
 CBPV Equation & CFG Optimization\\
\hline\\
$\kw{force} \, \ottsym{(}  \kw{thunk} \, \ottnt{M}  \ottsym{)} = \ottnt{M}$ 	& block merging   ``direct jump case" \\[0.5ex]
%
%(App v (Lam x. M)) = {v/x}M
$\ottnt{V} \!\cdot\! \lambda \ottmv{x} . \ottnt{M}  = \ottsym{\{}  \ottnt{V}  \ottsym{/}  \ottmv{x}  \ottsym{\}} \, \ottnt{M}$ &block merging   ``phi case" \\[0.5ex]

%Seq M to x in (Prd x) = M  
$\ottnt{\kw{prd} \, \ottnt{V}} \, \kw{to} \, \ottmv{x} \, \kw{in} \, \ottnt{M} = \ottsym{\{}  \ottnt{V}  \ottsym{/}  \ottmv{x}  \ottsym{\}} \, \ottnt{M}$ & move elimination \\[0.5ex]
%Seq (Aop op N1 N2) M = subst 0 (N1 \texttt{\textcolor{red}{<<no parses (char 2): op*** >>}} N1) M 
$\ottnt{\ottsym{(}   n_1   \oplus   n_2   \ottsym{)}} \, \kw{to} \, \ottmv{x} \, \kw{in} \, \ottnt{M} = \ottsym{\{}   n_1   \means{\oplus}   n_2   \ottsym{/}  \ottmv{x}  \ottsym{\}} \, \ottnt{M}$ &   constant folding \\[0.5ex]
 %
%
% App (Thunk (Lam y.M)) (Lam x.N) = {Thunk(Lam M)/x} N 
${\kw{thunk} \ (\lambda {y} . \ottnt{M})} \cdot\! \lambda \ottmv{x} . \ottnt{N}  = \ottsym{\{}  { \kw{thunk} \ (\lambda {y} . \ottnt{M})}   \ottsym{/}  \ottmv{x}  \ottsym{\}} \, \ottnt{N}$ & function inlining\\[0.5ex]
%
%IF True M N = M  \textbf{and}	IF False M N = N	&  		Dead Branch Elimination \\
 $\kw{if0} \;  \ottsym{0} \;  \ottnt{M_{{\mathrm{1}}}} \;  \ottnt{M_{{\mathrm{2}}}}  = \ottnt{M_{{\mathrm{1}}}}$  & dead branch elimination ``true branch"\\
 $\kw{if0} \;   n  \;  \ottnt{M_{{\mathrm{1}}}} \;  \ottnt{M_{{\mathrm{2}}}}  = \ottnt{M_{{\mathrm{2}}}}$ where \ $( n \not = 0)$ &dead branch elimination  ``false branch" \\[0.5ex]
% 
%
%IF B M M = M	&          Conditional Execution (or Branch Elimination) \\
$\kw{if0} \;  \ottnt{V} \;  \ottnt{M} \;  \ottnt{M}  = \ottnt{M}$ & branch elimination

%IFZ M (C[IFZ M M1 M2)) (D[IFZ M N1 N2))   =    IFZ M C[M1] D[N2]   (where FV(M) not bound in C or D)    & \\
%Rec C[Force x1] (x = M ..) = Rec C[M] (x1 = M ..)        when FV(M) = x_i    &   function inlining? \\
%Rec C[Force x1] (x = M ..) = Rec C[M] (x1 = M ..)        in general               &     also loop unrolling? \\
 \end{tabular}
\vspace{1em}

By relying on our CBPV equational theory it was simple to prove all these optimizations correct in Coq with a minimal amount of effort. This demonstrates the power of our approach in reducing the cost of verifying compiler optimizations. 
The same approach will similarly facilitate reasoning about more complex optimizations.
  
  %In the rest of the section, we describe a few of the examples from the previous table in more detail.
 
 \paragraph{Constant folding for arithmetic operations:} 
 replaces the plus operation on numbers by its resulting value, and leaves the rest of the program unchanged. 
%More precisely, given a program it substitutes all sub-programs of the form $ n_1   \oplus   n_2  \, \kw{to} \, \ottmv{x} \, \kw{in} \, \ottnt{N}$ by $\ottsym{\{}   n_1   \means{\oplus}   n_2   \ottsym{/}  \ottmv{x}  \ottsym{\}} \, \ottnt{N}$, where 
%$ \oplus  = +$. 
 %
 
\noindent
\paragraph{Function inlining: } replaces all function calls in the program with the bodies of the called functions. 
 
 \noindent
\paragraph{Dead code elimination for conditionals: } removes dead branches in conditional statements.

\vspace{\baselineskip} 
Verifying these optimizations was trivial using our CBPV equational theory because the optimization is just a step in our CBPV semantics, and because our equational theory is sound with respect to the CBPV semantics and equates beta-equivalent terms.
 
 \noindent
\paragraph{Identical branch simplification: } replaces conditional statements that have identical branches with the code of one of the branches. Unlike the three previous optimizations, this one does not follow directly from one step in the operational semantics. 
  However, it was still quite easy to prove its correctness.

\vspace{\baselineskip}
  
Figure~\ref{fig:opt} contains a CBPV program and a program that is optimized using several of these optimizations. 
It also contains the CFGs of the original and the optimized code. As we can see, the relation between the two CFG graphs is non-trivial.

\section{Related Work}
\label{sec:discussion}
The similarity between SSA intermediate representations and functional
languages using lexical scope was noticed early on by Appel~\cite{appel1992}.
Kelsey provided translations between the
intraprocedural fragment of SSA and a subset of CPS with
additional annotations~\cite{kelseyssa}. Chakravarty et al.~\cite{ChakravartyKZ03} extended
the correspondence to ANF forms, and presented a translation from SSA
to ANF. They used this translation to port a variant of the sparse
conditional constant propagation algorithm from SSA to ANF, and proved
the associated analysis sound.
Real compilers using ANF or CPS representations employ sophisticated
analyses to avoid performing the extra work of saving and restoring
the environment, and to turn the calls into direct
jumps~\cite{wandselective,reppylocalcps,fluetcontification}. These
intensional aspects of execution, however, are outside of the scope of
the cost model provided by the ANF machine.

Beringer et al.~\cite{BeringerMS03} define a first-order language
%with mutually recursive but non-nested local functions
that has both ``functional'' and ``imperative'' semantics. They present a
big-step environment-based semantics where local functions are
represented as closures in the environment, and a small-step semantics
where variable bindings are interpreted as assignments and local
function calls as jumps. They then identify a restricted subset of
programs for which the two evaluation functions agree. The approach is
similarly motivated by the more powerful reasoning principles
available for the ``functional'' semantics.

Recent work by by Schneider et al.~\cite{SchneiderSH15} uses a similar
approach, but their first-order language allows nested local function
declarations as well as observable effects. The subset of
terms for which the functional and imperative semantics of their
language agree includes functions with free variables bound in an
enclosing scope, using a condition they call \textit{coherence}. They
then formulate a register allocation transformation that puts programs
in coherent form, and prove that it preserves a form of trace
equivalence. They also provide machine-checked proofs.

The approach of linking existing calculi and virtual machines was
inspired by the methods of Ager et
al.~\cite{danvymachines,agervirtual}. The method of deriving compilers
and virtual machine interpreters is attributed to Wand~\cite{Wand85}.
Many authors have investigated the relationships between various forms
of operational semantics~\cite{hannan90,curien91}, and it continues
to be an active topic of research~\cite{danvy04,danvy08}. Our
contribution can be seen as an extension of this line of work to
control flow graph formalisms found in the compiler literature.

Compared to the derivations presented in~\cite{danvyvirtual}, our target language (and
therefore our approach) is more ad-hoc: we do not make any
connections to existing functional transformations such as
defunctionalization or CPS. The resulting
compiler also does more work by emitting
different instructions based on context. Rather than
beginning the derivation with an evaluation function, we use
small-step operational semantics throughout.

% \subsection{Conclusions}
% Our approach to the correspondence between SSA form and functional
% programming is to recast it as the relationship between an abstract
% machine and a source calculus. We have extended existing work in several
% key ways:
% \begin{itemize}
% \item Our source language is an existing higher-order lambda calculus
% with well-studied semantics and no additional syntactic restrictions
% or extensions.
% \item Our target is a machine operating on an explicit CFG
% representation of programs, similar to intermediate representations
% used by existing C compilers and recent work on verified language
% implementations~\cite{compcert, vellvm}.
% \item We prove a stronger correctness property: Our functional and
% imperative semantics coincide for all source calculus terms, rather
% than a carefully chosen subset. We also show that the structural
% operational semantics and virtual machine implement the same
% transition system (machine equivalence) instead of using a fixed notion of observational
% equivalence.
% \item We show how the source calculus can be equipped with ad-hoc
% annotations to extend the equivalence to additional instructions in
% the CFG machine, without compromising SOS-based reasoning.
% \end{itemize}

%% Local Variables:
%% fill-column: 70
%% eval: (auto-fill-mode)
%% eval: (flyspell-mode)
%% eval: (setq-local auto-hscroll-mode nil)
%% eval: (setq-local sentence-end-double-space nil)
%% End:

\section{Conclusion}
\label{sec:conclusion}
%% \begin{itemize}
%% \item Why not let-normalize? \\
%%   * ANF / CPS achieve a simple execution model by first translating to \\
%%   terms with greatly limited eval contexts \\
%%   * Wouldn't this simplify the PEAK machine? \\
%%   * PEAK does a kind of let-normalization during execution \\
%%   * Generated code would look similar \\
%%   * BUT ANF and CPS introduce additional redexes: correctness is often
%%   stated as plus simulation \\
%%   * Introduce join point functions for conditionals \\
%%   * Breaks simple lock-step machine property with source term: have to
%%   reason about result of transformation \\
%%   * This is an alternative that skips that particular step of
%%   compilation \\
%%   * Also, ANF is not closed under beta \\
%%   * It's unclear what CPS on CBPV would accomplish: no CBN/CBV \\
%%   * EoCC shows that an efficient execution strategy for CPS terms is
%%   highly nontrivial \\
%% \item CFG representations of lambda terms \\
%%   * Appel/Jim, Kennedy, different goals entirely: \\
%%   * Goal is efficient datastructure for optimization \\
%%   * Graph structure not intended to be execution model \\
%%   * SSA is both a datastructure and machine
%% \item Alternative CFG choices \\
%%   * Our goal was to replicate existing SSA IRs \\
%%   * Definitely a lot of choices for how arguments are used \\
%%   * E.g. separate push instruction, batch pop on return \\
%%   * Arith operations are sequenced even though they have no effects \\
%%   * Could treat these as values vs. computations \\
%%   * Don't force sequencing: sea-of-nodes IR? Click
%% \end{itemize}

Operational semantics and equational theories are defined on lambda calculi; compiler optimizations are applied to control flow graphs, often in SSA form. In this paper, we have shown that the two are fundamentally the same object: a call-by-push-value term can be translated to an SSA CFG such that a step of reduction in the term corresponds exactly to a step of execution in the CFG. We have proved this machine equivalence by using a chain of intermediate machines, gradually separating out the statically computable aspects of reduction until we arrive at a compiler and a CFG execution model. The tight correspondence between CBPV and CFG semantics gives us a new tool for verifying CFG-level program transformations, by proving equivalence of the corresponding CBPV terms in an equational theory.

%% Local Variables:
%% fill-column: 70
%% eval: (auto-fill-mode)
%% eval: (flyspell-mode)
%% eval: (setq-local auto-hscroll-mode nil)
%% eval: (setq-local sentence-end-double-space nil)
%% End:

% \bibliographystyle{abbrvnat}
% \bibliography{bibliography}

%\nocite{*}
\bibliographystyle{abbrvnat}
{\small
\bibliography{biblio}

\begin{thebibliography}{30}
\providecommand{\natexlab}[1]{#1}
\providecommand{\url}[1]{\texttt{#1}}
\expandafter\ifx\csname urlstyle\endcsname\relax
  \providecommand{\doi}[1]{doi: #1}\else
  \providecommand{\doi}{doi: \begingroup \urlstyle{rm}\Url}\fi

\bibitem[Ager et~al.(2003{\natexlab{a}})Ager, Biernacki, Danvy, and
  Midtgaard]{agervirtual}
M.~S. Ager, D.~Biernacki, O.~Danvy, and J.~Midtgaard.
\newblock From interpreter to compiler and virtual machine: a functional
  derivation.
\newblock \emph{BRICS Report Series}, 10\penalty0 (14), 2003{\natexlab{a}}.

\bibitem[Ager et~al.(2003{\natexlab{b}})Ager, Biernacki, Danvy, and
  Midtgaard]{danvymachines}
M.~S. Ager, D.~Biernacki, O.~Danvy, and J.~Midtgaard.
\newblock A functional correspondence between evaluators and abstract machines.
\newblock In \emph{Proceedings of the 5th International {ACM} {SIGPLAN}
  Conference on Principles and Practice of Declarative Programming, 27-29
  August 2003, Uppsala, Sweden}, pages 8--19, 2003{\natexlab{b}}.
\newblock \doi{10.1145/888251.888254}.

\bibitem[Aho et~al.(1986)Aho, Sethi, and Ullman]{Dragonbook}
A.~V. Aho, R.~Sethi, and J.~D. Ullman.
\newblock \emph{Compilers: Principles, Techniques, and Tools}.
\newblock Addison-Wesley, 1986.
\newblock ISBN 0-201-10088-6.

\bibitem[Appel(1992)]{appel1992}
A.~W. Appel.
\newblock \emph{Compiling with Continuations}.
\newblock Cambridge University Press, 1992.
\newblock ISBN 0-521-41695-7.

\bibitem[Appel and Jim(1997)]{appeljim}
A.~W. Appel and T.~Jim.
\newblock Shrinking lambda expressions in linear time.
\newblock \emph{J. Funct. Program.}, 7\penalty0 (5):\penalty0 515--540, 1997.

\bibitem[Beringer et~al.(2003)Beringer, MacKenzie, and Stark]{BeringerMS03}
L.~Beringer, K.~MacKenzie, and I.~Stark.
\newblock Grail: a functional form for imperative mobile code.
\newblock \emph{Electr. Notes Theor. Comput. Sci.}, 85\penalty0 (1):\penalty0
  3--23, 2003.
\newblock \doi{10.1016/S1571-0661(05)80083-0}.

\bibitem[Chakravarty et~al.(2003)Chakravarty, Keller, and
  Zadarnowski]{ChakravartyKZ03}
M.~M.~T. Chakravarty, G.~Keller, and P.~Zadarnowski.
\newblock A functional perspective on {SSA} optimisation algorithms.
\newblock \emph{Electr. Notes Theor. Comput. Sci.}, 82\penalty0 (2):\penalty0
  347--361, 2003.
\newblock \doi{10.1016/S1571-0661(05)82596-4}.

\bibitem[Curien(1991)]{curien91}
P.~Curien.
\newblock An abstract framework for environment machines.
\newblock \emph{Theor. Comput. Sci.}, 82\penalty0 (2):\penalty0 389--402, 1991.
\newblock \doi{10.1016/0304-3975(91)90230-Y}.

\bibitem[Cytron et~al.(1991)Cytron, Ferrante, Rosen, Wegman, and
  Zadeck]{CytronFRWZ91}
R.~Cytron, J.~Ferrante, B.~K. Rosen, M.~N. Wegman, and F.~K. Zadeck.
\newblock Efficiently computing static single assignment form and the control
  dependence graph.
\newblock \emph{{ACM} Trans. Program. Lang. Syst.}, 13\penalty0 (4):\penalty0
  451--490, 1991.
\newblock \doi{10.1145/115372.115320}.

\bibitem[Danvy(2003)]{danvyvirtual}
O.~Danvy.
\newblock A journey from interpreters to compilers and virtual machines.
\newblock In \emph{Generative Programming and Component Engineering, Second
  International Conference, {GPCE} 2003, Erfurt, Germany, September 22-25,
  2003, Proceedings}, page 117, 2003.
\newblock \doi{10.1007/978-3-540-39815-8\_7}.

\bibitem[Danvy(2008)]{danvy08}
O.~Danvy.
\newblock Defunctionalized interpreters for programming languages.
\newblock In J.~Hook and P.~Thiemann, editors, \emph{Proceeding of the 13th
  {ACM} {SIGPLAN} international conference on Functional programming, {ICFP}
  2008, Victoria, BC, Canada, September 20-28, 2008}, pages 131--142. {ACM},
  2008.
\newblock \doi{10.1145/1411204.1411206}.

\bibitem[Danvy and Nielsen(2004)]{danvy04}
O.~Danvy and L.~R. Nielsen.
\newblock Refocusing in reduction semantics.
\newblock \emph{BRICS Report Series}, 11\penalty0 (26), 2004.

\bibitem[Felleisen and Friedman(1986)]{cek}
M.~Felleisen and D.~P. Friedman.
\newblock \emph{Control Operators, the {SECD}-machine, and the
  $\lambda$-calculus}.
\newblock Indiana University, Computer Science Department, 1986.

\bibitem[Fluet and Weeks(2001)]{fluetcontification}
M.~Fluet and S.~Weeks.
\newblock Contification using dominators.
\newblock In B.~C. Pierce, editor, \emph{Proceedings of the Sixth {ACM}
  {SIGPLAN} International Conference on Functional Programming {(ICFP} '01),
  Firenze (Florence), Italy, September 3-5, 2001.}, pages 2--13. {ACM}, 2001.
\newblock \doi{10.1145/507635.507639}.

\bibitem[Garbuzov et~al.(2018)Garbuzov, Mansky, Rizkallah, and
  Zdancewic]{formalization}
D.~Garbuzov, W.~Mansky, C.~Rizkallah, and S.~Zdancewic.
\newblock Accompanying Coq formalization., 2018.
\newblock URL
  \url{http://www.cse.unsw.edu.au/~crizkallah/publications/equational_theory_cbpv.tar.gz}.

\bibitem[Hannan and Miller(1990)]{hannan90}
J.~Hannan and D.~Miller.
\newblock From operational semantics to abstract machines: Preliminary results.
\newblock In \emph{{LISP} and Functional Programming}, pages 323--332, 1990.
\newblock \doi{10.1145/91556.91680}.

\bibitem[Kelsey(1995)]{kelseyssa}
R.~Kelsey.
\newblock A correspondence between continuation passing style and static single
  assignment form.
\newblock In \emph{Proceedings {ACM} {SIGPLAN} Workshop on Intermediate
  Representations (IR'95), San Francisco, CA, USA, January 22, 1995}, pages
  13--23, 1995.
\newblock \doi{10.1145/202529.202532}.

\bibitem[Kennedy(2007)]{kennedy07}
A.~Kennedy.
\newblock Compiling with continuations, continued.
\newblock In R.~Hinze and N.~Ramsey, editors, \emph{Proceedings of the 12th
  {ACM} {SIGPLAN} International Conference on Functional Programming, {ICFP}
  2007, Freiburg, Germany, October 1-3, 2007}, pages 177--190. {ACM}, 2007.
\newblock \doi{10.1145/1291151.1291179}.

\bibitem[Kumar et~al.(2014)Kumar, Myreen, Norrish, and Owens]{cakeml}
R.~Kumar, M.~O. Myreen, M.~Norrish, and S.~Owens.
\newblock Cakeml: A verified implementation of ml.
\newblock In \emph{Proceedings of the 41st ACM SIGPLAN-SIGACT Symposium on
  Principles of Programming Languages}, POPL '14, pages 179--191, New York, NY,
  USA, 2014. ACM.
\newblock ISBN 978-1-4503-2544-8.
\newblock \doi{10.1145/2535838.2535841}.
\newblock URL \url{http://doi.acm.org/10.1145/2535838.2535841}.

\bibitem[Leroy(2009)]{compcert}
X.~Leroy.
\newblock A formally verified compiler back-end.
\newblock \emph{J. Autom. Reasoning}, 43\penalty0 (4):\penalty0 363--446, 2009.
\newblock \doi{10.1007/s10817-009-9155-4}.

\bibitem[Levy(1999)]{levy}
P.~B. Levy.
\newblock Call-by-push-value: {A} subsuming paradigm.
\newblock In J.~Girard, editor, \emph{Typed Lambda Calculi and Applications,
  4th International Conference, TLCA'99, L'Aquila, Italy, April 7-9, 1999,
  Proceedings}, volume 1581 of \emph{Lecture Notes in Computer Science}, pages
  228--242. Springer, 1999.
\newblock \doi{10.1007/3-540-48959-2\_17}.

\bibitem[Muchnick(1997)]{Muchnick}
S.~S. Muchnick.
\newblock \emph{Advanced Compiler Design and Implementation}.
\newblock Morgan Kaufmann, 1997.
\newblock ISBN 1-55860-320-4.

\bibitem[Pierce(2004)]{ATAPL}
B.~C. Pierce.
\newblock \emph{Advanced Topics in Types and Programming Languages}.
\newblock The MIT Press, 2004.
\newblock ISBN 0262162288.

\bibitem[Pitts(2000)]{pitts}
A.~M. Pitts.
\newblock Operational semantics and program equivalence.
\newblock In G.~Barthe, P.~Dybjer, L.~Pinto, and J.~Saraiva, editors,
  \emph{Applied Semantics, International Summer School, {APPSEM} 2000, Caminha,
  Portugal, September 9-15, 2000, Advanced Lectures}, volume 2395 of
  \emph{Lecture Notes in Computer Science}, pages 378--412. Springer, 2000.
\newblock \doi{10.1007/3-540-45699-6\_8}.

\bibitem[Plotkin(2004)]{sos}
G.~D. Plotkin.
\newblock A structural approach to operational semantics.
\newblock \emph{J. Log. Algebr. Program.}, 60-61:\penalty0 17--139, 2004.

\bibitem[Reppy(2001)]{reppylocalcps}
J.~Reppy.
\newblock Local {CPS} conversion in a direct-style compiler.
\newblock In \emph{Proceedings of the Third ACM SIGPLAN Workshop on
  Continuations ({CW} '01)}, pages 13--22. Citeseer, 2001.

\bibitem[Schneider et~al.(2015)Schneider, Smolka, and Hack]{SchneiderSH15}
S.~Schneider, G.~Smolka, and S.~Hack.
\newblock A first-order functional intermediate language for verified
  compilers.
\newblock \emph{CoRR}, abs/1503.08665, 2015.

\bibitem[St{\o}vring and Lassen(2009)]{StovringL09}
K.~St{\o}vring and S.~B. Lassen.
\newblock A complete, co-inductive syntactic theory of sequential control and
  state.
\newblock In J.~Palsberg, editor, \emph{Semantics and Algebraic Specification,
  Essays Dedicated to Peter D. Mosses on the Occasion of His 60th Birthday},
  volume 5700 of \emph{Lecture Notes in Computer Science}, pages 329--375.
  Springer, 2009.

\bibitem[Wand(1985)]{Wand85}
M.~Wand.
\newblock From interpreter to compiler: a representational derivation.
\newblock In H.~Ganzinger and N.~D. Jones, editors, \emph{Programs as Data
  Objects, Proceedings of a Workshop, Copenhagen, Denmark, October 17-19,
  1985}, volume 217 of \emph{Lecture Notes in Computer Science}, pages
  306--324. Springer, 1985.

\bibitem[Wand and Steckler(1994)]{wandselective}
M.~Wand and P.~Steckler.
\newblock Selective and lightweight closure conversion.
\newblock In H.~Boehm, B.~Lang, and D.~M. Yellin, editors, \emph{Conference
  Record of POPL'94: 21st {ACM} {SIGPLAN-SIGACT} Symposium on Principles of
  Programming Languages, Portland, Oregon, USA, January 17-21, 1994}, pages
  435--445. {ACM} Press, 1994.
\newblock \doi{10.1145/174675.178044}.

\end{thebibliography}
}
%% \newpage
%% \appendix
%% \section{Appendix: Collected Semantics}
%% \label{sec:appendix}
%% \input{temp/appendix}
% \clearpage
% \section{Notes}
% \input{temp/notes}
\end{document}